\newtheorem{theorem}{Theorem}[section]
\newtheorem{lemma}[theorem]{Lemma}
\newtheorem{proposition}[theorem]{Proposition}
\newtheorem{definition}[theorem]{Definition}
\newtheorem{remark}[theorem]{Remark}
\newtheorem{warning}[theorem]{Warning}
\newcommand{\CAT}{\textrm{CAT}}
\newcommand{\BHV}{\textrm{BHV}}
\newcommand{\aC}{\mathcal{C}}
\newcommand{\aD}{\mathcal{D}}
\newcommand{\aE}{\mathcal{E}}
\newcommand{\aK}{\mathcal{K}}
\newcommand{\aP}{\mathcal{P}}
\newcommand{\depth}{\textrm{depth}}
\begin{document}

\title{Genomic data analysis in tree spaces}

\author{Sakellarios Zairis$^{1}$, Hossein Khiabanian$^{1,\dagger}$, Andrew J. Blumberg$^{2}$, and Raul Rabadan$^{1}$\\
\\
$^1$ Department of Systems Biology, Columbia University\\
$^2$ Department of Mathematics, UT Austin\\
$^\dagger$ Present address: Rutgers Cancer Institute, Rutgers University\\}
\maketitle


\begin{abstract}

Phylogenetic trees are arguably the most common representation of evolutionary processes; they have been used to characterize pathogen spread, the relationship between different species, and the evolution of cancers.
Comparison between different trees is a key part of the analysis of evolutionary phenomena in this framework.
For instance, one might compare the evolutionary trajectories of tumors in different patients to study the differential response to therapy.

Recently, an elegant approach has been introduced by Billera-Holmes-Vogtmann that allows a systematic comparison of different evolutionary histories using the metric geometry of tree spaces.
We begin by reviewing in detail the relevant mathematical and computational foundations for applying standard techniques from machine learning and statistical inference in these spaces, which we refer to as evolutionary moduli spaces.

In many problem settings one encounters heavily populated phylogenetic trees, where the large number of leaves encumbers visualization and analysis in the relevant evolutionary moduli spaces.
To address this issue, we introduce {\em tree dimensionality reduction}, a structured approach to reducing large and complex phylogenetic trees to a distribution of smaller trees.
We prove a stability theorem ensuring that small perturbations of the large trees are taken to small perturbations of the resulting distributions.

We then present a series of four biologically motivated applications to the analysis of genomic data, spanning cancer and infectious disease.
The first quantifies how chemotherapy can disrupt the evolution of common leukemias.
The second examines a link between geometric information and the histologic grade in relapsed gliomas, where longer relapse branches were specific to high grade glioma.
The third concerns genetic stability of xenograft models of cancer, where heterogeneity at the single cell level increased with later mouse passages.
The last studies genetic diversity in seasonal influenza A virus.
We apply tree dimensionality reduction to project 24 years of longitudinally collected H3N2 hemagglutinin sequences into distributions of smaller trees spanning between three and five seasons.
A negative correlation is observed between the influenza vaccine effectiveness during a season and the variance of the distributions produced using preceding seasons' sequence data.
We also show how tree distributions relate to antigenic clusters and choice of influenza vaccine.
These results provide compelling evidence that our formalism exposes links between genomic data of influenza A and important clinical observables, namely vaccine selection and efficacy.

\end{abstract}
\newpage
\tableofcontents


\section{Introduction}

The importance of phylogenetic tree structures in biological sciences cannot be overstated.
In 1859, Charles Darwin proposed the tree as a metaphor for the process of species generation through branching of ancestral lineages~\cite{darwin1859origin}.
Since then, tree structures have been used pervasively in biology, where terminal branches can represent a diverse set of biological entities and taxonomic units including individual or families of genes, organisms, populations of related organisms, species, genera, or higher taxa.

Although the widespread use of trees has been recently questioned on the grounds that some biological taxa are not the results of simple branching processes~\cite{doolittle1999phylogenetic, chan2013topology}, a large set of evolutionary processes are well captured by tree-like structures.
In particular, trees describe clonal evolution events that start from asexual reproduction of a single organism (the primordial clone), which mutates and differentiates into a large progeny (see Figure~\ref{fig:illustration_1})~\cite{khiabanian2014viral}.
Examples of these processes include single gene phylogeny in non-recombinant viruses, evolution of bacteria that are not involved in horizontal gene transfer events, and metazoan development from a single germ cell.
Recent developments in genomics allow the study of such events in exquisite detail, particularly at single cell/clone levels~\cite{navin2011tumour, shalek2013single, eirew2014dynamics}.

An important example of clonal evolution can be observed in cancer where a single cell replicates and spreads uncontrollably.
The answers to many clinical and biological questions involve investigating different phases of a tumor's clonal evolution.
How do tumors originate?
How do tumors spread?
How might a particular therapy disrupt tumor evolution?
How does evolutionary information correlate with disease prognosis?
How can patients' tumors be classified according to their evolutionary history?
It is also essential to assess the value of genomic information measured at a particular time for predicting subsequent events in tumor progression.
If the evolution of the tumor is well approximated by a simple stepwise accumulation of alterations in the dominant clone, future observations are expected to contain all the mutational information at present.
This scenario is often termed linear evolution, in reference to the shape of the implied phylogeny.
On the other hand, if an ancestral clone gains resistance to therapy and dominates the long-term population, a very different shape of tree would be observed, often termed branched evolution to distinguish from the linear case.

These and many other questions associated with clonal processes can be expressed in terms of comparing evolutionary histories.
For instance, stratifying cancer patients according to their evolutionary history requires a way of comparing trees and associating summary statistics to clouds of trees.
In a recent paper~\cite{zairis2014moduli}, we proposed applying the metric geometry of the space of phylogenetic trees, constructed by Billera-Holmes-Vogtmann, to this problem.
As illustrated in Figure~\ref{fig:illustration_2}, each history/patient is represented by a point in a space of trees.
We referred to these tree spaces as ``evolutionary moduli spaces,'' in which classifying histories translates into finding patterns within a point cloud.

In this paper, we develop these applications in detail and introduce new machinery for dimensionality reduction.
We begin by reviewing (in Section~\ref{sec:phylospace}) some of the fundamental notions regarding the geometry of evolutionary moduli spaces, following Billera-Holmes-Vogtman~\cite{billera2001geometry}, Sturm~\cite{sturm2003probability}, and our prior treatment~\cite{zairis2014moduli}.
Motivated by several biological applications, we also introduce a projective version of these spaces and study some of their geometric properties.
In Section~\ref{sec:ML}, we give an overview of the application of standard machine learning and statistical inference techniques in these spaces.
Applying these techniques, in Section~\ref{sec:cancer} we describe the interpretive power of evolutionary moduli spaces for studying cancer genomic data in three examples: evolution of chronic lymphocytic leukemia under therapy (Section~\ref{sec:CLL}), grade at relapse in glioma (Section~\ref{sec:glioma}), and clonal dynamics in patient-derived xenografts (Section~\ref{sec:xeno}).
In each case, we find that representing the problem in terms of the evolutionary moduli spaces permits meaningful inference with clinical significance.

The trees arising from clonal evolution in cancers tend to be quite small, since each leaf corresponds to a tissue sample from the tumor.
However, when studying the evolution of a viral outbreak with thousands of isolates, or the genomic characterization of thousands of single cells, we expect very large trees.
Visualization and analysis of large phylogenetic trees is problematic.
In Section~\ref{sec:treedimred}, we introduce a procedure for ``tree dimensionality reduction'' as a means to visualize and study densely populated trees by projecting them onto distributions of smaller trees.
We argue that the properties of these distributions capture salient information about the large trees.
To support this contention, we provide both theoretical and experimental validation.
We prove a ``stability'' theorem (Theorem~\ref{thm:stab}) which bounds perturbation in the projected trees in terms of perturbation of the large tree; this implies in particular that the procedure is robust to certain kinds of noise.
We also describe the use of tree dimensionality reduction to produce a novel genomic predictor for influenza vaccine effectiveness in Section~\ref{sec:flu}.

\begin{figure}
    \begin{subfigure}{0.5\linewidth}
    \centering
    \includegraphics[height=3in]{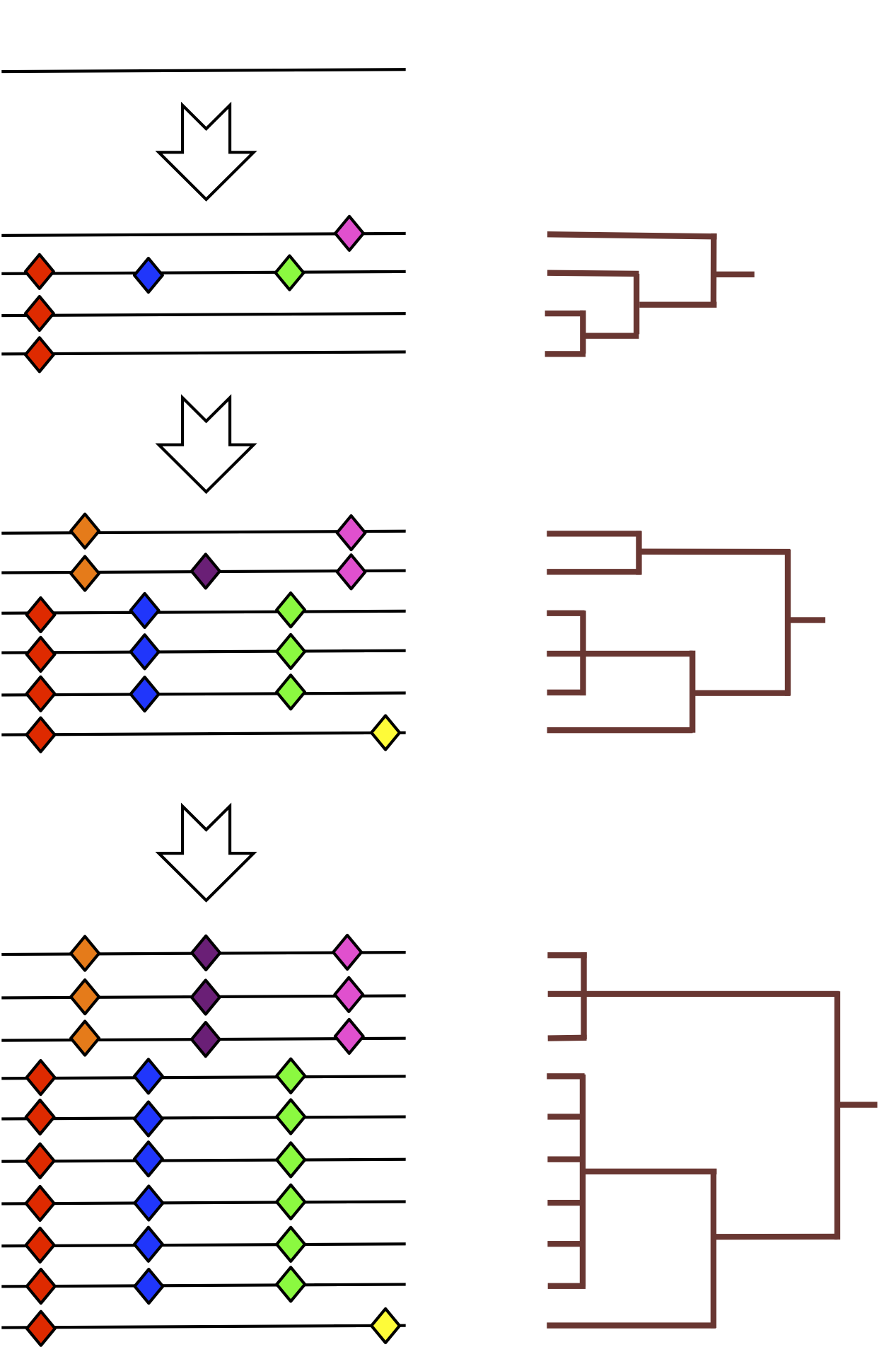}
    \end{subfigure}
    ~
    \begin{subfigure}{0.5\linewidth}
    \centering
    \includegraphics[height=3in]{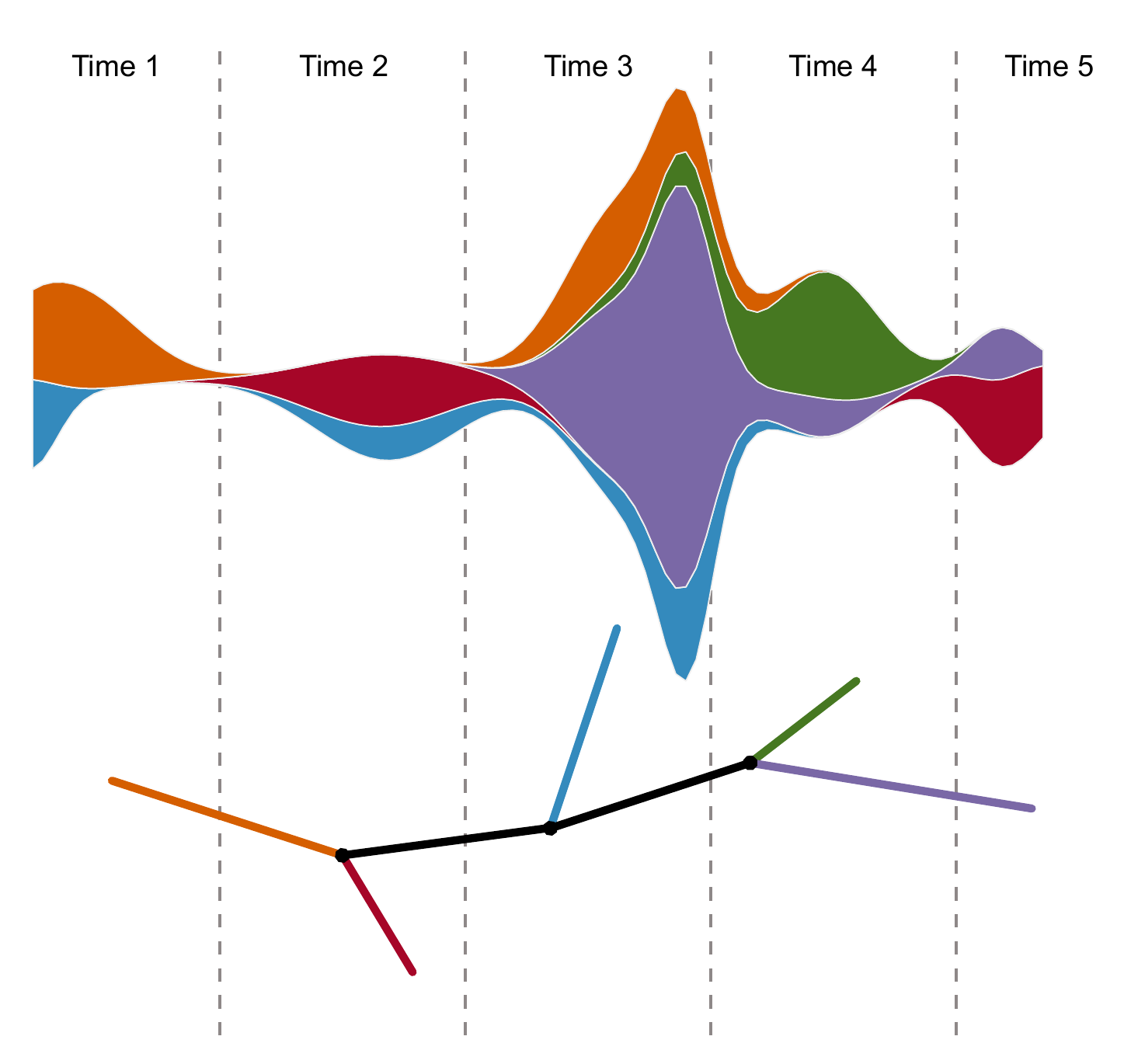}
    \end{subfigure}
    \caption{{\bf Clonal evolution of an asexually reproducing genome}. (Left) Through acquisition of mutations, the primordial clone gives rise to a large heterogeneous population, whose evolutionary history can be accurately described by a tree. (Right) Longitudinal sampling of a clonal population permits the construction of phylogenetic trees that approximate the underlying history. Subpopulations are represented in different colors; random sampling of a particular genotype at each time point is illustrated in the color of the external branch in the tree. This tree is one of many that could be observed when sampling this population.}
     \label{fig:illustration_1}
\end{figure}

\begin{figure}
    \centering
    \includegraphics[height=3.5in]{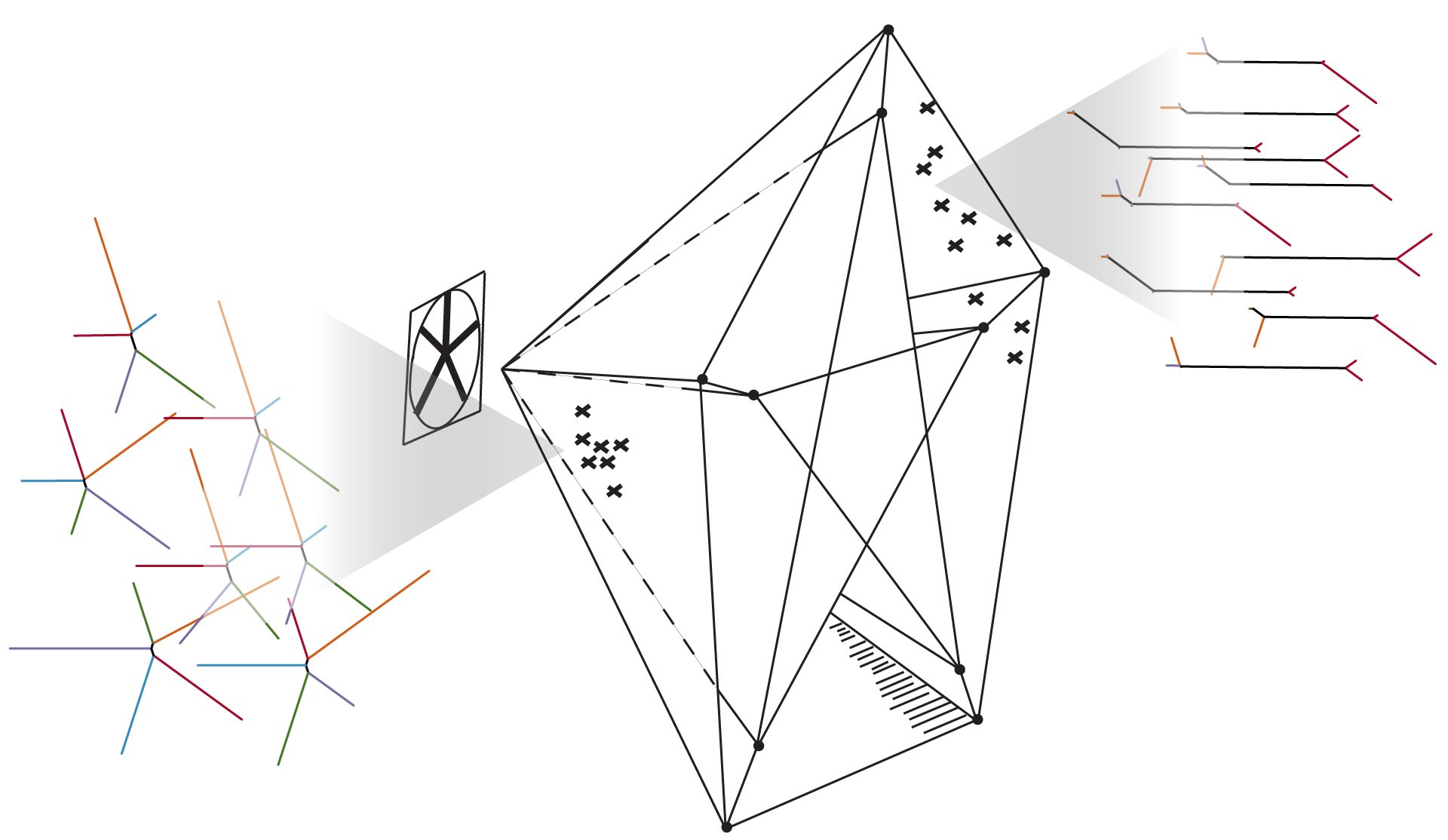}
    \caption{{\bf Moduli space of phylogenetic trees describing clonal evolution.} Collections of trees are points in a metric space, forming a point cloud. Trees with the same topology live in the same orthant, and crossing into an adjacent orthant corresponds to a tree rotation. Points closer to the vertex of the cone have relatively little internal branch length, while points near the base of the cone have little weight in the external branches.}
    \label{fig:illustration_2}
\end{figure}


\section{Spaces of phylogenetic trees}\label{sec:phylospace}

The foundation of our framework for analysis is the {\em metric geometry} of the space of phylogenetic trees~\cite{billera2001geometry}.
The purpose of this section is to review in detail the spaces of phylogenetic trees that we work with and their geometric structure.
We begin with a rapid review of the geometry of geodesic metric spaces and the theory of cubical complexes.
We then review the definition and properties of the Billera-Holmes-Vogtmann metric space of phylogenetic trees and its metric geometry, following the the excellent original treatment.
Finally, we discuss the properties of projective versions of tree spaces which are relevant for some of the biological applications.

\subsection{A rapid review of metric geometry}

In this subsection we quickly explain the foundations of metric geometry.
See~\cite{bridson99, burago01} for comprehensive textbooks on the subject.
A metric space $(X,d)$ is a set $X$ equipped with a distance function $d \colon X \times X \to \mathbb{R}^{\geq 0}$ having the properties that $d(x,y) = d(y,x)$, $d(x,y) = 0$ if and only if $x = y$, and $d(x,z) \leq d(x,y) + d(y,z)$ for all $x,y,z \in X$ (the triangle inequality).
Although metric spaces often arise in contexts in which there is not an evident notion of geometric structure, it turns out that under very mild hypotheses a metric space $(X,d)$ can be endowed with structures analogous to those arising on Riemannian manifolds.
A metric space is a length space if the distance $d(x,y)$ is realized as the infimum of the lengths of paths joining $x$ and $y$.
A length space $(X,d)$ is a geodesic metric space if any two points $x$ and $y$ can be joined by a path with length precisely $d(x,y)$.
A key insight of Alexandrov is that {\em curvature} makes sense in any geodesic metric space~\cite{alexandrov1957uber}.

The idea is that the curvature of a space can be detected by considering the behavior of the area of triangles, and triangles can be defined in any geodesic metric spaces.
Specifically, given points $p, q, r$, we have the triangle $T = [p,q,r]$ with edges the paths that realize the distances $d(p,q)$, $d(p,r)$, and $d(q,r)$.
The connection between curvature and area of triangles comes from the observation that given side lengths $(\ell_1, \ell_2, \ell_3) \subset \mathbb{R}^3$, a triangle with these side lengths on the surface of the Earth is ``fatter'' than the corresponding triangle on a Euclidean plane.
To be precise, we consider the distance from a vertex of the triangle to a point $p$ on the opposite side --- in a fat triangle, this distance will be larger than in the the corresponding Euclidean triangle (and smaller in a thin triangle).

Given a triangle $T=[p,q,r]$ in $(X,d)$, we can find a corresponding triangle $\tilde{T}$ in Euclidean space with the same edge lengths.
Given a point $z$ on the edge $[p,q]$, a comparison point in $\tilde{T}$ is a point $\tilde{z}$ on the corresponding edge $[\tilde{p}, \tilde{q}]$ such that $d_E(\tilde{z}, \tilde{p}) = d(z,p)$ and $d_E(\tilde{z},\tilde{q}) = d(z,q)$.
(Where here $d_E$ denotes the Euclidean metric.)
We say that a triangle $T$ in $M$ satisfies the $\CAT(0)$ inequality if for every such pair $(z, \tilde{z})$, we have $d(r,z) \leq d_E(\tilde{r},\tilde{z})$.
If every triangle in $M$ satisfies the $\CAT(0)$ inequality then we say that $M$ is a $\CAT(0)$ space.

More generally, let $M_{\kappa}$ denote the unique two-dimensional Riemannian manifold with curvature $\kappa$.
The diameter of $M_{\kappa}$ will be denoted $D_{\kappa}$.
Then we say that a geodesic metric space $M$ is $\CAT(\kappa)$ if every triangle in $M$ with perimeter $\leq 2D_{\kappa}$ satisfies the inequality above for the corresponding comparison triangle in $M_{\kappa}$.
If $\kappa' \leq \kappa$, any $CAT(\kappa')$ space is also $\CAT(\kappa)$.
A $n$-dimensional Riemannian manifold $M$ that is sufficiently smooth has sectional curvature $\leq \kappa$ if and only if $M$ (regarded as a metric space) is $\CAT(\kappa)$.
For example, Euclidean spaces are $\CAT(0)$, spheres are $\CAT(1)$, and hyperbolic spaces are $\CAT(-1)$.

As described, $\CAT(\kappa)$ is a global condition; we will say that a metric space $(X,d)$ is locally $\CAT(\kappa)$ if for every $x$ there exists a radius $r_x$ such that $B_{r_x}(x) \subseteq X$ is $\CAT(\kappa)$.
For example, the flat torus (obtained by identifying opposite edges in a rectangle) is locally $\CAT(0)$ but not globally $\CAT(0)$.
The Cartan-Hadamard theorem implies that a simply-connected metric space that is locally $\CAT(0)$ is also globally $\CAT(0)$.

A remarkably productive observation of Gromov is that many geometric properties of Riemannian manifolds are shared by $\CAT(\kappa)$ spaces.
In particular, $\CAT(\kappa)$ spaces with $\kappa \leq 0$ (referred to as {\em non-positively curved metric spaces}) admit unique geodesics joining each pair of points $x$ and $y$, balls $B_{\epsilon}(x)$ are convex and contractible for all $x$ and $\epsilon \geq 0$, and midpoints of geodesics are well-behaved.
As a consequence, there exist well-defined notions of mean and variance of a set of points, and more generally one can develop some of the foundations of classical statistics, as we review below in Section~\ref{sec:ML}.

\subsection{Cubical complexes and their links}

In this subsection, we review the theory of cubical complexes, which provide a rich source of examples of $\CAT(0)$ metric spaces (again, see~\cite{burago01} or~\cite{bridson99} for textbook treatments).
It is in general very difficult to determine for an arbitrary metric space whether it is $\CAT(\kappa)$ for any given $\kappa$.
Even for finite polyhedra where the metric is induced from the Euclidean metric on each face, this problem does not have a general solution.
The important of cubical complexes in this context comes from an effective criterion for determining if they are non-positively curved (i.e., $\CAT(0)$).

Let $I^n \subseteq \mathbb{R}^n$ denote the $n$-dimensional unit cube $[0,1] \times \ldots \times [0,1]$, regarded as inheriting a metric structure from the standard metric on $\mathbb{R}^n$.
A codimension $k$ face of the cube $I^n$ is determined by fixing $k$ coordinates to be in the set $\{0,1\}$.
A cubical complex is a metric space obtained by gluing together cubes via the data of isometries of faces, subject to the condition that two cubes are connected by at most a single face identification and no cube is glued to itself.
The metric structure is the length metric induced from the Euclidean metric on the cubes, i.e., the distance between $x$ and $y$ is the infimum of the lengths over all paths from $x$ to $y$ that can be expressed as the union of finitely many segments each contained within a cube.
When the cubical complex $C$ is finite or locally finite, results of Bridson~\cite{bridson91} and Moussong~\cite{moussong} imply that $C$ is a complete geodesic metric space.

Gromov gave a criterion for a cubical complex to be $\CAT(0)$ that is often possible to check in practice.
In order to explain this criterion, we need to review the notion of the link of a vertex in a cubical complex.

Fix a vertex $v$ in a cubical complex $C$ and a cube $C_i \cong I^m \subseteq C$ such that $v$ is a vertex of $C_i$.
For fixed $\epsilon > 0$, the all-right spherical simplex associated to $(C_i,v)$ is the subset 
\[
S(C_i,v) = \{z \in C_i \, | \, d(z,v) = \epsilon\}.
\]
The set $S(C_i,v)$ has a metric induced by the Euclidean angle metric.
The faces of $S(C_i,v)$ are defined as the intersections of $S(C_i,v)$ with faces of $C_i$; equivalently, these are the all-right spherical simplexes associated to faces of $C_i$.
The collection of all-right spherical simplices for all pairs $(C_i, v)$ forms a polyhedral complex with metric given by the length metric induced from the angle metrics; this is referred to as a spherical complex.
Forgetting the metric structure, the all-right spherical simplices also form an abstract simplicial complex.
(Recall that an abstract simplicial complex is simply a set of subsets of a set $V$ that is closed under passage to subsets.)

The link of a vertex $v$ in a cubical complex $C$ is the spherical complex obtained as the subset  
\[
L(v) = \{z \in C \, | \, d(z,v) = \epsilon\},
\]
for fixed $0 < \epsilon < 1$.
Gromov's criterion now states that the cubical complex $C$ is locally $\CAT(0)$ if and only if the link is $\CAT(1)$ or the abstract simplicial complex underlying the link is flag.
(Recall that a flag complex is a simplicial complex in which a $k$-simplex is in the complex if and only its $1$-dimensional faces are in the complex.)

As an easy application of Gromov's criterion, we conclude the section by showing the standard result that the Cartesian product of locally $\CAT(0)$ cubical complexes is itself a locally $\CAT(0)$ cubical complex.
Let $X$ and $Y$ be cubical complexes that are $\CAT(0)$.
Since $I^n \times I^m \cong I^{n+m}$, it is clear that $X \times Y$ has the structure of a cubical complex.
The set of vertices of $X \times Y$ is given by the product of the sets of vertices of $X$ and $Y$ respectively.
The link of a vertex $(v,v')$ in $X \times Y$, regarded as an abstract simplicial complex, is the join of $L(v)$ and $L(v')$, which we denote $L(v) \ast L(v')$ (the join of complexes $S_1$ and $S_2$ is obtained by considering all pairwise unions of elements of $S_1$ and $S_2$).
Finally, since the join of flag complexes is easily seen to be a flag complex, Gromov's criterion now implies that $X \times Y$ is locally $\CAT(0)$.

\subsection{The Billera-Holmes-Vogtmann spaces of phylogenetic trees}

A phylogenetic tree with $m$ leaves is a weighted, connected graph with no cycles, having $m$ distinguished vertices (referred to as {\em leaves}) of degree $1$ and labeled $\{1, \ldots, m\}$.
All the other vertices are of degree $\geq 3$.
We refer to edges that terminate in leaves as {\em external} edges and the remaining edges are {\em internal}.

The space $\BHV_m$ of isometry classes of rooted phylogenetic trees with $m$-labelled leaves where the nonzero weights are on the internal branches was introduced and studied by Billera, Holmes, and Vogtmann~\cite{billera2001geometry}.
The space $\BHV_m$ is constructed by gluing together $(2m-3)!!$ positive orthants $\mathbb{R}^m_{\geq 0}$; each orthant corresponds to a particular tree
topology, with the coordinates specifying the lengths of the edges.
A point in the interior of an orthant represents a binary tree; if any of the coordinates are $0$, the tree is obtained from a binary tree by collapsing some of the edges.
We glue orthants together such that a (non-binary) tree is on the boundary between two orthants when it can be obtained by collapsing edges from either tree geometry.
Put another way, two tree topologies are adjacent when they are connected by a {\em rotation}, i.e., one topology can be generated from the other by collapsing an edge to length $0$ and then expanding out another edge from the incident vertex.

The metric on $\BHV_{m}$ is induced from the standard Euclidean distance on each of the orthants.
For two trees $t_1$ and $t_2$ which are both in a given orthant, the distance $d_{\BHV_{m}}(t_1,t_2)$ is defined to be the Euclidean distance between the points specified by the weights on the edges.
For two trees which are in different quadrants, there exist (many) paths connecting them which consist of a finite number of straight lines in each quadrant.
The length of such a path is the sum of the lengths of these lines, and the distance $d_{\BHV_{m}}(t_1,t_2)$ is then the minimum length over all such
paths.
For many points, the shortest path goes through the ``cone point'', the star tree in which all internal edges are zero.

Allowing potentially nonzero weights for the $m$ external leaves corresponds to taking the cartesian product with an $m$-dimensional orthant.
We will focus on the space \[\Sigma_m = \BHV_{m-1} \times \mathbb{R}^m_{\geq 0},\] which we refer as the evolutionary moduli space (the $m-1$ index
arises from the fact that we consider unrooted trees.)
There is a metric on $\Sigma_m$ induced from the metric on $\BHV_{m-1}$.
Specifically, for a tree $t$, let $t(i)$ denote the length of the external edge associated to the vertex $i$.
Then \[d_{\Sigma_m}(t_1,t_2) =
\sqrt{\left(d_{\BHV_{m-1}}(\bar{t}_1,\bar{t}_2)\right)^2 + \sum_{i=1}^m (t_1(i) -
  t_2(i))^2},\] where $\bar{t}_i$ denotes the tree in $\BHV_{m-1}$
obtained by forgetting the lengths of the external edges (e.g.,
see~\cite{owen2011fast}).
As explained in ~\cite[\S4.2]{billera2001geometry}, efficiently computing the metric on $\Sigma_m$ is a nontrivial problem, although there exists a polynomial-time algorithm~\cite{owen2011fast}.

The main result of Billera, Holmes, and Vogtmann is that the length metric on $\BHV_n$ endows this space with a (global) $\CAT(0)$ structure.
By subdividing each orthant into cubes in the evident fashion, $\Sigma_m$ is naturally a cubical complex where the metric we have described is the one induced from the Euclidean metric on the cubes; a straightforward combinatorial analysis of the link of $\Sigma_m$ implies the result via Gromov's criterion.
In addition, $\Sigma_m$ is clearly a complete and separable metric space; any tree can be approximated by a sequence of trees in the same orthant that have rational edge lengths.

\subsection{The projective evolutionary moduli space}

In evolutionary applications, we are often interested in classifying and comparing distinct behaviors by understanding the relative lengths of edges: rescaling edge lengths should not change the relationship between the branches~\cite{zairis2014moduli}.
Motivated by this consideration, we define $\mathbb{P} \Sigma_m$ to be the subspace of $\Sigma_m$ consisting of the points $\{t_i\}$ in each orthant for which the constraint $\sum_{i} t_i = 1$ holds.

We denote the space of trees with internal edges of fixed length by $\tau_{m-1}$.
The space of $m$ external branches whose lengths sum to $1$ is the standard $m-1$ dimensional simplex $\Delta_{m-1}$ in $\mathbb{R}^m$.
The constraint that the length of internal branches plus the external branches sum to $1$ implies that 
\[\mathbb{P}\Sigma_m = \tau_{m-1} \star \Delta_{m},\]
where here $\star$ denotes the join of two spaces, using the Milnor model of the join.
We can also describe $\mathbb{P}\Sigma_m$ as the link on the origin in $\Sigma_m$.

There are various possible natural metrics to consider on $\mathbb{P} \Sigma_m$.
The simplest way to endow $\mathbb{P} \Sigma_m$ with a metric is to use the induced intrinsic metric specified by paths in $\Sigma_m$ constrained to lie entirely within $\mathbb{P} \Sigma_m$.
From the perspective of metric geometry, the characterization of $\mathbb{P} \Sigma_m$ as the link of the origin endows it with a ``spherical'' metric, and Gromov's criteria imply that with this metric, $\mathbb{P} \Sigma_m$ is a $\CAT(1)$ space.
(Alternatively, $\mathbb{P} \Sigma_m$ is the spherical join of $\tau_{m-1}$ and the spherical realization of the $\Delta_{m}$; since $\tau_{m-1}$ and $\Delta_{m}$ are $\CAT(1)$, so is their spherical join~\cite[II.3.15]{bridson99}.)
The theory of polyhedral complexes implies that in either case $\mathbb{P} \Sigma_m$ is a complete geodesic metric space~\cite[I.7.19]{bridson99}, and it is evidently separable.

Moreover, with the induced intrinsic metric $\mathbb{P} \Sigma_m$ is in fact a $\CAT(0)$ space; although $\tau_{m-1}$ has points which are not connected by unique geodesics (see Section~\ref{sec:inj} below for a more detailed discussion), the join with $\Delta_{m}$ introduces a new ``cone direction'' that changes the geometry.

\begin{theorem}
The projective moduli space $\mathbb{P} \Sigma_m$ endowed with the intrinsic metric is a $\CAT(0)$ space.
\end{theorem}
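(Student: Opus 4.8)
The plan is to use the description $\mathbb{P}\Sigma_m = \tau_{m-1}\star\Delta_m$, but first to recast $\mathbb{P}\Sigma_m$ with its intrinsic metric as a piecewise-Euclidean polyhedral complex and then to establish the $\CAT(0)$ property through the Cartan--Hadamard theorem. For the first step: each orthant $\mathbb{R}^N_{\geq 0}$ in the cubical decomposition of $\Sigma_m$ meets the slice $\{\sum_i t_i = 1\}$ in the regular Euclidean simplex on the standard basis vectors, and these simplices are glued to one another precisely along the faces corresponding to collapsed edges. Hence $\mathbb{P}\Sigma_m$, with the intrinsic metric, is a finite piecewise-Euclidean ($M_0$) simplicial complex whose underlying simplicial complex is the combinatorial join $\tau_{m-1}\star\Delta_m$; in particular, as already noted, it is a complete geodesic metric space. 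I want to stress that this is genuinely \emph{not} the spherical join metric --- the cells are flat regular simplices, not all-right spherical simplices --- so one cannot simply invoke \cite[II.3.15]{bridson99}, which in any case would only yield $\CAT(1)$; and indeed $\tau_{m-1}$ with its spherical metric is not even uniquely geodesic. The content of the theorem is that the extra flat ``cone direction'' supplied by $\Delta_m$ destroys the antipodal pairs in $\tau_{m-1}$ that obstruct unique geodesics.

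Given this polyhedral model, the Cartan--Hadamard theorem reduces the statement to showing that $\mathbb{P}\Sigma_m$ is simply connected and locally $\CAT(0)$. For simple connectivity I would use that, topologically, $\mathbb{P}\Sigma_m \cong \tau_{m-1}\star\Delta_m$, and the join of any space with a contractible space is contractible; since $\Delta_m$ is a simplex, $\mathbb{P}\Sigma_m$ is contractible. For local $\CAT(0)$ I would apply the link criterion for piecewise-Euclidean complexes: $\mathbb{P}\Sigma_m$ is locally $\CAT(0)$ if and only if the link of every vertex is a $\CAT(1)$ piecewise-spherical complex, which unwinds recursively into the statement that the link of every simplex is $\CAT(1)$. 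The key structural fact --- exactly what joining with the external simplex buys --- is that $\Delta_m$ is a face of every maximal simplex of $\mathbb{P}\Sigma_m$. Hence for a simplex $\sigma = \sigma_\tau\cup\sigma_E$, with $\sigma_\tau$ a simplex of $\tau_{m-1}$ and $\sigma_E$ a proper subset of the vertex set $E$ of $\Delta_m$, the link of $\sigma$ is the spherical join $\mathrm{Lk}_{\tau_{m-1}}(\sigma_\tau)\star\Delta(E\setminus\sigma_E)$: a nonempty all-right spherical simplex (which is $\CAT(1)$) joined with a link in $\tau_{m-1}$. Since $\tau_{m-1}$ is the link of the cone point in the $\CAT(0)$ cubical complex $\BHV_{m-1}$, every such link is $\CAT(1)$ (links in $\CAT(0)$ polyhedral complexes are $\CAT(1)$), so by \cite[II.3.15]{bridson99} the spherical join is $\CAT(1)$; the remaining links, of faces whose external part is all of $E$, are just $\tau_{m-1}$ itself, again $\CAT(1)$. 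Thus every link is $\CAT(1)$, and Cartan--Hadamard finishes the proof.

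The step I expect to require the most care is this local computation, and in particular the claim that every link really splits off a nonempty spherical simplex as a spherical \emph{join} --- this relies on the combinatorial fact that every orthant of $\Sigma_m$ contains all $m$ external edges and on checking that the internal- and external-edge directions are genuinely orthogonal inside each flat simplex, so that the link metric is the spherical join metric and not something more complicated. A secondary point is to make precise the identification of $\mathrm{Lk}_{\tau_{m-1}}(\sigma_\tau)$ with the link of the corresponding face of $\BHV_{m-1}$, and to cite the form of the link criterion valid for general $M_0$-complexes (due to Bridson), since the version of Gromov's criterion quoted earlier is stated only in the cubical (flag) case.
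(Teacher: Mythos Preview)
Your approach via Cartan--Hadamard and the link condition is genuinely different from the paper's. The paper instead writes $\Delta_m$ as an iterated join $\Delta_0\star\cdots\star\Delta_0$ and argues by induction that each further $\star\,\Delta_0$ (a cone) preserves $\CAT(0)$, with the base case $\tau_{m-1}\star\Delta_0$ deduced directly from the fact that $\BHV_{m-1}$ is $\CAT(0)$. No links in the piecewise-Euclidean simplex metric are ever computed.

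Your proposal, however, has a genuine gap at precisely the step you flag as delicate. The cells of $\mathbb{P}\Sigma_m$ are \emph{regular} Euclidean simplices (the slice $\{\sum t_i=1\}$ of an orthant), and in a regular simplex the angle between any two edge directions at a vertex is $\pi/3$, not $\pi/2$: for vertices $e_0,e_i,e_j$ one computes $\cos\angle(e_i-e_0,\,e_j-e_0)=\tfrac12$. Thus the internal- and external-edge directions are \emph{not} orthogonal inside each flat simplex, so the check you single out actually fails. Consequently the link of a face $\sigma$ is a piecewise-spherical complex built from spherical simplices with edge length $\arccos\tfrac{1}{|\sigma|+1}$ (in particular $\pi/3$ for a vertex), not all-right simplices, and it does \emph{not} carry the spherical-join metric. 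Your appeal to \cite[II.3.15]{bridson99} therefore does not apply, and Gromov's flag criterion is unavailable since that is specific to the all-right case. The combinatorial join decomposition of the link is correct, but promoting it to the needed metric $\CAT(1)$ statement would require a separate argument about these non--all-right spherical complexes --- exactly the difficulty the paper's iterated-cone argument sidesteps.
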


\begin{proof}
First, recall that $\tau_{m-1} \star \Delta_{k}$ is isomorphic to 
\[
\tau_{m-1} \star \underbrace{\Delta_0 \star \Delta_0 \ldots \star \Delta_0}_{k}.
\]
To see this, observe that a point in the join of $\tau_{m-1} \star \Delta_{k-1}$ with $\Delta_0$ can be described as a tuple 
\[
\left((wt_0, \ldots, wt_n), (wx_0, \ldots, wx_{k-1}), 1-w\right),
\]
where $\sum_{i=0}^n t_i + \sum_{i=0}^{k-1} x_i = 1$ and $w \in [0,1]$.  
This data is clearly equivalent to a tuple 
\[
\left((t_0, \ldots, t_n), (x_0, \ldots, x_{k-1}, x_{k})\right)
\]
where $\sum_{i=0}^n t_i + \sum_{i=0}^k x_i = 1$.

The fact that $\BHV_{m-1}$ is $\CAT(0)$ implies that the cone $\tau_{m-1} \star \Delta_0$ is $\CAT(0)$, and from this it follows by induction that $\tau_{m-1} \star \Delta_{m}$ is also $\CAT(0)$.
\end{proof}

To compute the intrinsic metric on $\mathbb{P}\Sigma_m$, we use $\epsilon$-nets and a local-to-global construction.
Recall that a set of points $S$ in a metric space $(X,\partial)$ is an $\epsilon$-net if for every $z \in X$, there exists $q \in S$ such that $\partial(z,x) < \epsilon$.
For a compact metric space equipped with a probability measure such that all non-empty balls in the metric space have nonzero measure, we can produce an $\epsilon$-net by sampling.
More precisely, it is straightforward to show that given a finite collection of measurable sets $\{A_1, A_2, \ldots, A_k\}$ and a probability measure $\mu$ on $\cup_i A_i$ such that $\mu(A_i) \geq \alpha > 0$, then given at least
\[
\frac{1}{\alpha}\left(\log k + \log(\frac{1}{\delta})\right)
\]
samples, with probability $1-\delta$ there is at least one sample in every $A_i$~\cite[5.1]{niyogi2008finding}.

Next, suppose that we have a metric space $(X,\partial)$ where there exists a constant $\kappa$ such that if $\partial(x,y) < \kappa$, it is easy to compute $\partial(x,y)$.
An algorithm for approximating $\partial$ on all of $X$ is then to take a dense sample $S \subset X$, form the graph $G$ with vertices the points of $S$ and edges between $x$ and $y$ when $\partial(x,y) < \kappa$, and define the distance between $x$ and $y$ in $X$ to be the graph metric on $G$ between the nearest points to $x$ and $y$ in $S$.
This distance can be efficiently computed using Dijkstra's algorithm~\cite{dijkstra1959note}.

When $S$ is an $\epsilon$-net for sufficiently small $\epsilon$ relative to $\kappa$, we can describe the quality of the resulting approximation to $\partial$~\cite[Thm. 2]{bernstein2000graph}.
In particular, if $\epsilon < \frac{\kappa}{4}$, then 
\[
\partial(x,y) \leq \partial_G(x,y) \leq (1 + 4\frac{\delta}{\epsilon}) \partial(x,y).
\]

Putting this all together, to approximate the metric on $\mathbb{P}\Sigma_m$ we take the union of $\epsilon$-nets on all of the simplices (including the faces) and form a $\kappa$-approximation $\partial_G$ as above.
In practice, the required density of samples is determined by looking at when the approximation converges (i.e., when the change in distances drops below a specified precision bound).
We sample densely on each simplex representing a tree topology on $m$ leaves, and explicitly include certain key singular points of the projective space.
Figure~\ref{fig:epsilon_net} contains two visualizations of the projective space, using force-directed layouts of the $\epsilon$-nets constructed on 4--leaved and 5--leaved trees respectively.

\begin{figure}
    \begin{subfigure}{0.5\linewidth}
    \centering
    \includegraphics[height=2in]{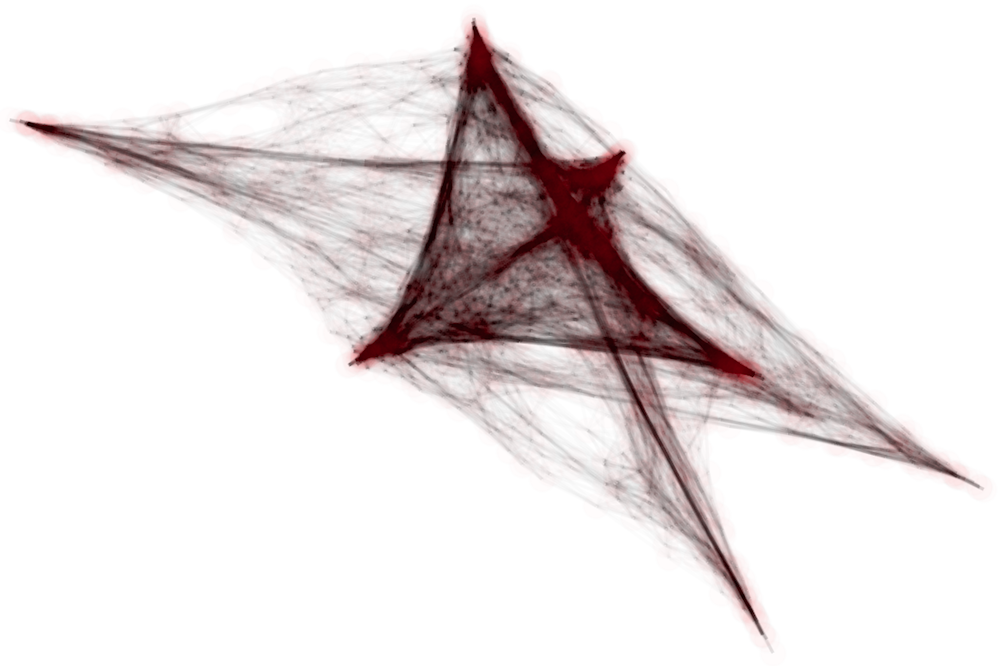}
    \end{subfigure}
    ~
    \begin{subfigure}{0.5\linewidth}
    \centering
    \includegraphics[height=2.5in]{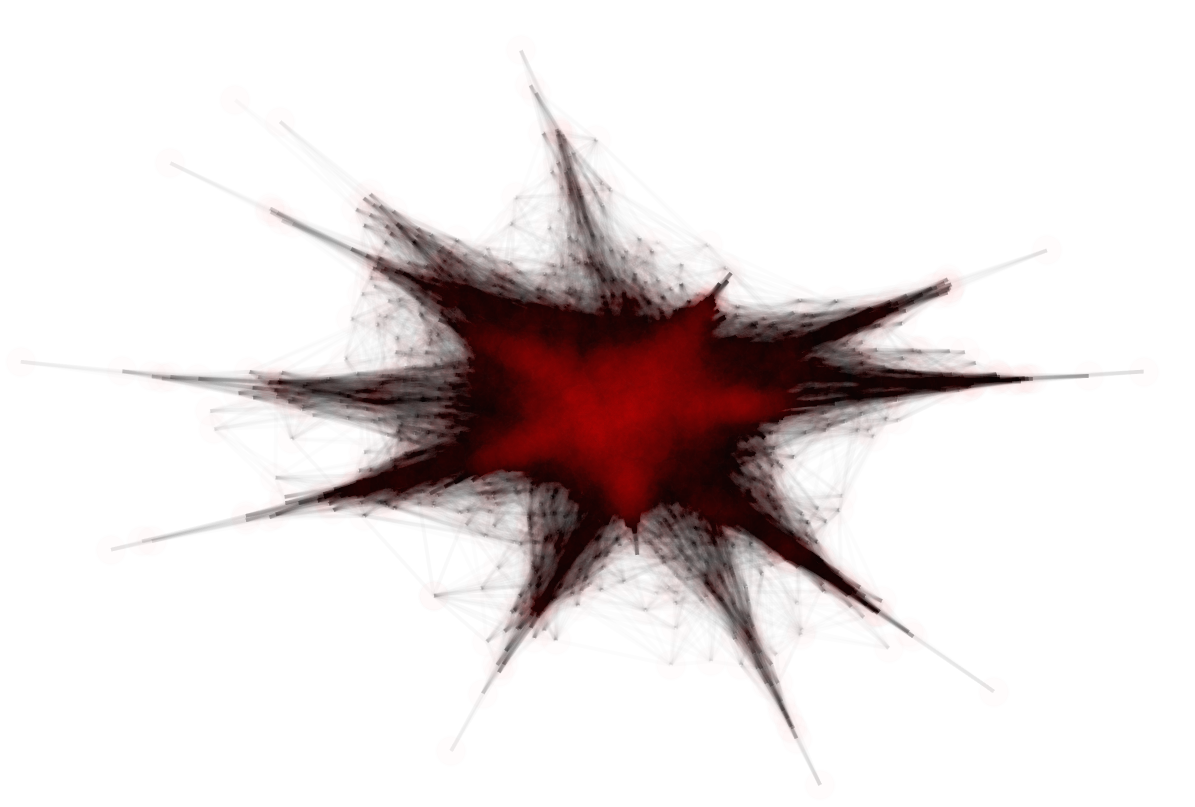}
    \end{subfigure}

    \caption{{\bf Discrete approximation of $\mathbb{P}\Sigma_m$, projective BHV space.} Unit norm phylogenetic trees on $m$ leaves are densely sampled and used as vertices in a graph weighted by pairwise BHV distances. All edges above a threshold distance are removed, with the threshold determined as the smallest value that maintains a single connected component. We call this graph the  $\epsilon$-net. The projective distance is then defined as the graph distance along the $\epsilon$-net, and we visualize a force directed layout of the graph. On the left we represent $\mathbb{P}\Sigma_4$, and on the right $\mathbb{P}\Sigma_5$. The case of four leaves is more easily appreciated as a join space, as described earlier.}
     \label{fig:epsilon_net}
\end{figure}

\subsection{The size of $\mathbb{P}\Sigma_m$}\label{sec:inj}

We now describe the size of $\mathbb{P}\Sigma_m$ (see also~\cite[3.3]{billera2001geometry} for a related discussion).
For simplicity, we will focus on the link in $\BHV_n$, which we will denote by $\aP_n$, and temporarily ignore the join with the simplex coming from the external edge lengths.
Observe that adjacent top-dimensional simplices in $\aP_n$ differ by a {\em rotation} of tree topologies, where a rotation collapses an internal edge and then expands out from the resulting node.
Next, recall that the homotopy type of $\aP_n$ is a wedge of $(n-1)!$ spheres of dimension $(n-3)$~\cite{robinson1996tree} (and see also~\cite[Thm. 6]{devadoss2014polyhedral}).
Moreover, we can explicitly describe these spheres, as follows.

As discussed in~\cite[Prop. 1]{devadoss2014polyhedral} and~\cite[\S 3.1]{billera2001geometry}, the boundary of the dual polytope to standard associahedron on $n$ letters (parametrizing parenthesizations of $n$ terms) embeds in many different ways into $\aP_n$.
Following~\cite{devadoss2014polyhedral}, let us denote this boundary by $\aK_n$.
Explicitly $\aK_n$ is a simplicial sphere of dimension $(n-3)$ where a $k$-simplex corresponds to a planar rooted tree with $n$ leaves and $k+1$ internal edges.
Then the homotopy type of $\aP_n$ can be described in terms of various embedded copies of $\aK_n$.
As a consequence, to understand the size of $\aP_n$, we need to compute the diameter of $\aK_n$.

For convenience, we describe this diameter in terms of counts of simplices; the actual value can then be obtained by multiplying by the diameter of a simplex.
In this guise, the problem is an old one which can be described in many different forms, perhaps most relevantly as the computation of maximal rotation distances between binary trees.

The main result here is that, for unrooted trees on $n$ leaves, the diameter of $\aK_n$ is $2n - 8$ for $n > 11$~\cite{pournin2014diameter}; this bound was established asymptotically (for sufficiently large but indeterminate $n$) in~\cite{sleator1988rotation}.
For smaller values, we have the following table (taken from~\cite[\S2.3]{sleator1988rotation}) of explicit values:

\begin{table}[ht]
    \caption{Diameters for $\aK_n$ for small values of $n$}
    \centering
    \begin{tabular}{c c c c c c c c}
    \hline\hline
    4 & 5 & 6 & 7 & 8 & 9 & 10 & 11 \\
    \hline
    2 & 4 & 5 & 7 & 9 & 11 & 12 & 15 \\
    \hline
    \end{tabular}
\end{table}

\begin{warning}
The results given in~\cite{pournin2014diameter} and~\cite{sleator1988rotation} differ slightly from the formula above and from each other due to divergent choices of indexing convention.
\end{warning}

More generally, the maximum rotation distance between labelled trees on $n$ leaves is $O(n \log n)$~\cite{sleator1992short}.
Of course, the cone point associated to the join with standard simplex means that the size is considerably smaller.


\section{Machine learning and statistical inference in $\Sigma_m$ and $\mathbb{P}\Sigma_m$}\label{sec:ML}

Our motivation for using the metric geometry of $\Sigma_m$ and $\mathbb{P}\Sigma_m$ comes from the problems of describing and comparing collections of trees generated from experimental data.
Regarding such collections as samples from distributions on the evolutionary moduli spaces, we are interested in basic statistical inference --- estimating parameters describing these distributions and determining if two samples came from the same or different distributions.
More generally, we would like to understand the kinds of distributions that can arise in evolutionary moduli spaces.
We are also interested in clustering and classification (i.e., unsupervised and supervised learning) problems in this context.
Given a set of unlabeled samples, we want to infer clusters of points that have similar clinical outcomes.
Given a set of labeled samples, we want to produce classifiers that can assign labels to new points in order to predict clinical outcomes.
In this section, we will review available tools for these kinds of problems.

\subsection{Statistics for distributions in $\Sigma_m$ and $\mathbb{P}\Sigma_m$}

In order to study probability distributions in evolutionary moduli spaces, it is necessary to have reasonable notions of moments of the distribution, expectation of random variables, and analogues of the law of large numbers.
Since $\Sigma_m$ and $\mathbb{P}\Sigma_m$ are $\CAT(0)$ spaces, points are connected by unique geodesics and there is a sensible notion of a centroid of a collection of points.
Discussion of statistical inference in $\Sigma_m$ was initiated in~\cite{billera2001geometry}, and subsequently Holmes has written extensively on this topic~\cite{holmes2003bootstrapping, holmes2003bootstrapping, holmes2005statistical} (and see also~\cite{feragen2013tree}).
More generally, Sturm explains how to study probability measures on general $\CAT(0)$ spaces~\cite{sturm2003probability}.
He shows that there are reasonable notions of moments of distribution, expectation of random variables, and analogues of the law of large numbers on $\CAT(0)$ spaces.

\begin{definition}
Given a fixed set of $n$ trees $\{T_0, \ldots T_{n-1}\} \subseteq \Sigma_m$, the Fr\'echet mean $T$ is the unique tree that minimizes the quantity \[E = \sum_{i = 0}^{n-1} d_{\Sigma_m}(T_i, T)^2.\] 
The variance of $T$ is the ratio $\frac{E}{n}$.
\end{definition}

Sturm provides an iterative procedure for computing the mean and variance of a set of points in $\Sigma_m$, and by exploiting the local geometric structure of $\Sigma_m$, Miller, Owen, and Provan produce somewhat more efficient algorithms for computing the mean~\cite{miller2012polyhedral}.
Furthermore, Sturm proves versions of Jensen's inequality and the law of large numbers in this context.
The situation for the central limit theorem is less satisfactory.
Barden, Le, and Owen study central limit theorems for Fr\'echet means in $\Sigma_m$~\cite{barden2013mean}; as they explain, the situation exhibits non-classical behavior and the limiting distributions depend on the codimension of the simplex in which the mean lies.
Finally, there has been some work on principal components analysis (PCA) in $\Sigma_m$~\cite{Nye2011PCA}.

However, in contrast to classical statistics on $\mathbb{R}^n$, we do not know many sensible analytically-defined distributions on the evolutionary moduli spaces.
Billera-Vogtmann-Holmes briefly introduce a family of Mallows distribution on $\Sigma_m$ with density function
\[
x(t) = \kappa e^{\alpha d_{\Sigma_m}(t_1, t)}
\] 
for fixed $t_1 \in \Sigma_m$, and an analogous family can be defined on $\mathbb{P}\Sigma_m$.
Sampling from these distributions is not easy; in general, the behavior of distributions on $\Sigma_m$ and sampling algorithms is somewhat perverse due to the pathological behavior near the origin due to the exponential growth in the mass of an $\epsilon$ ball.
A much more tractable source of distributions on $\Sigma_m$ and $\mathbb{P}\Sigma_m$ arise from resampling from a given set of empirical data points.

\subsection{Distributions in $\Sigma_m$ and $\mathbb{P}\Sigma_m$ via distributions in $\mathbb{R}^n$}

One way to grapple with the difficulties in dealing with distributions on $\Sigma_m$ and $\mathbb{P}\Sigma_m$ is to instead study associated projections into distributions on Euclidean space.
The advantage of this approach is evident; we are now in a setting where the theory of moments, the central limit theorem, and asymptotic consistency for resampling procedures are all very familiar.
Of course, it is important to keep in mind that the moments derived in this setting will reflect the geometry of the evolutionary moduli space is complicated ways, and inverses to the projections will not usually exist.
Nonetheless, for purposes of many kinds of statistical tests (e.g., hypothesis testing about distributions generating observed samples), this approach can be very effective.
There are a number of natural ways to map metric measure spaces into Euclidean space; in this section, we discuss several strategies derived from the use of the metric.

An intrinsic map comes from looking at the distance distribution on $\mathbb{R}$ induced by $\partial_M$.
Specifically, given a Borel distribution $\Psi$ on $(M, \partial_M)$, the product distribution $\Psi \otimes \Psi$ on $M \times M$ induces a distibution on $\mathbb{R}$ via $\partial_M$.
Applying this construction to the empirical measure on a finite sample yields the empirical distance distribution.
More generally, for any fixed $n$, we can consider the distribution on $\mathbb{R}^{n^2}$ induced by taking the product distribution $\Psi^{\otimes n}$ on $M^{\times n}$ and applying $\partial_M$ to produce the $n \times n$ matrix of distances.
Gromov's ``mm-reconstruction theorem'' showed that in the limit as $n \to \infty$, the distance matrix distributions completely characterize the distribution $\Psi$ on $(M, \partial_M)$~\cite{gromov1981}.
Once again, given a sufficiently large finite sample, we can construct the empirical distance matrix distributions for any fixed $n$.

Another approach involves choosing a fixed set of $n$ landmarks and considering the vector of distances from a fixed point to the landmarks.
Given a set $L = \{t_1, t_2, \ldots, t_n\} \subset M$, there is a continuous map
\[
d_L \colon M \to \mathbb{R}^n
\]
specified by
\[
x \mapsto (\partial_M(x,t_1), \partial_M(x,t_2), \ldots, \partial_M(x,t_n)).
\]
Pushforward along $d_L$ again induces a distribution on $\mathbb{R}^n$ from one on $M$.
One expects that as $k$ increases (provided the landmarks are ``generic''), the induced distributions in $\mathbb{R}^n$ will characterize the distribution on $M$.

In both cases, choice of the parameter $k$ depends on some sense of the intrinsic dimension of the support of the distribution as well as the number of points available (in the case of finite samples).
Unfortunately, the required $k$ may well be quite large.

Finally, there is a substantial body of work on low-distortion embeddings of finite metric spaces into $\ell^p$ spaces, in particular Euclidean spaces.
Recall that the distortion of a non-contractive (distance expanding) embedding of metric spaces $f \colon (X, \partial_X) \to (Y, \partial_Y)$ is given by $\sup_{x_1 \neq x_2} \frac{\partial_Y (f(x_1), f(x_2))}{\partial_X (x_1, x_2)}$.
Notably, Abraham, Bartal, and Neiman show that one can construct a probabilistic embedding of an $n$-point finite metric space into an $O(\log n)$ dimensional space with distortion $O(\log n)$.
Pushing forward distributions on $\Sigma_m$ and $\mathbb{P}\Sigma_m$ provide another way of reducing statistical questions to Euclidean space.

\subsection{Distinguishing samples from different underlying distributions}

Given a set of samples $X \subset \Sigma_m$ and a partition $X = \aC_1 \cup \aC_2 \ldots \cup \aC_n$ (where $\aC_i \cap \aC_j = \emptyset$), it is often useful to be able to determine whether or not the different $\aC_i$ were generated from the same or different underlying distributions.
For instance, $\aC_1$ might represents samples from patients who received treatment and $\aC_2$ is untreated patients, or the different groups $\aC_i$ represent different observed genetic markers.
Based on the discussion of the previous two subsections, we can study this problem directly in $\Sigma_m$ or in via projections to $\mathbb{R}^n$.

Te Fr\'echet mean and variance provides a summary of each collection of samples $\aC_i$.
A standard comparison between groups is then given by the distance 
\[
\theta_{ij} = d_{\Sigma_m}(T(\aC_i), T(\aC_j))
\]
between the means.
In order to understand the variability due to sampling, we can use bootstrap resampling (or more general $k$ out of $n$ resampling without replacement) to generate confidence intervals for the value of $\theta_i$.
Asymptotic consistency for the bootstrap follows from the fact that the VC dimension of the collections of balls in $\Sigma_m$ and $\mathbb{P}\Sigma_m$ is bounded, via the usual criteria~\cite{Gine1984, Gine1986, Gine1990}.

However, it is often simpler to consider tests induced by the projections into $\mathbb{R}^n$ discussed above.
Here, we can compare collections $\aC_i$ and $\aC_j$ by using any of the many standard non-parametric comparison techniques for real distributions, for example $\chi^2$ tests or two-sample Kolmogorov-Smirnov tests.

One pervasive problem in clinical applications is that often the number of samples is quite small, and so we are often far from the asymptotic regime for statistical tests.
Standard small-sample corrections can be applied.
However, for this reason a machine learning approach to analyzing the data is often more useful.

\subsection{Clustering in $\Sigma_m$ and $\mathbb{P}\Sigma_m$}

Given the difficulties with statistical inference in $\Sigma_m$ and $\mathbb{P}\Sigma_m$, it is useful to complement these approaches with techniques from machine learning.
The most basic family of techniques we might consider is clustering, a kind of unsupervised learning.
Here, given a finite set $X$ in $\Sigma_m$ or $\mathbb{P}\Sigma_m$, we search for a partitionings of the points into clusters which optimize some criterion for the ``goodness'' of the clustering.

Regarding $\Sigma_m$ and $\mathbb{P}\Sigma_m$ simply as metric spaces, we can apply standard clustering algorithms that operate on arbitrary metric spaces.
For example, we can apply standard $k$-means clustering, using the centroids as defined above.
A related alternative is the $k$-medoids algorithm.
Like $k$-means, $k$-medoids seeks partitions which are optimal in the sense of minimizing the sum of squared distances; the cost function for a cluster $C = \{x_i, x_j\}$ is given by $\sum_{i < j} \partial(x_i,x_j)^2$.
But instead of using cluster centroids as in $k$-means, cluster assignments are determined by medoids, which are points $z \in C$ that minimize $\sum_i d(z,i)$.
The advantage of $k$-medoids over $k$-means is that the problem of finding a centroid in $\Sigma_m$ or $\mathbb{P}\Sigma_m$ is avoided.

Another natural family of clustering algorithms comes from spectral clustering techniques.
Recall that spectral clustering can be applied to finite subsets of any metric space; one constructs an embedding into Euclidean space using the graph Laplacian associated to a graph encoding the local metric structure of the set of points and then performs $k$-means clustering.
As such, spectral clustering can be applied both to $\Sigma_m$ and $\mathbb{P}\Sigma_m$.
However, as illustrated in Figure~\ref{fig:embedding}, there is substantial distortion under such embeddings for low dimensions.

\begin{figure}
    \centering
    \includegraphics[width=\linewidth]{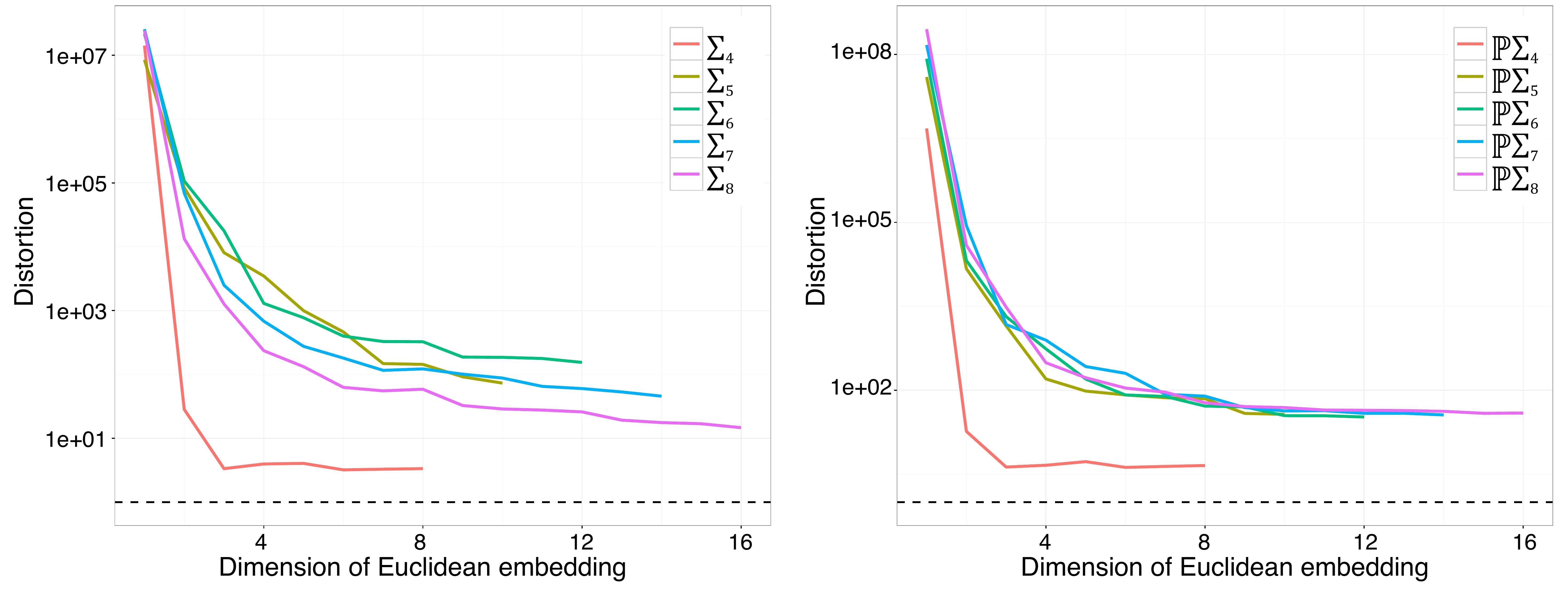}
    \caption{{\bf Euclidean embedding of the affine and projective tree spaces respectively, $\Sigma_m$ and $\mathbb{P}\Sigma_m$.} A spectral embedding approach is taken to finding Euclidean approximations of our $\epsilon$-nets. We are interested in the relation between $m$ and the smallest $d$ with acceptable distortion of embedding into $\mathbb{R}^d$.}
    \label{fig:embedding}
\end{figure}

\subsection{Supervised Learning}

Although clustering algorithms are very useful for exploratory data analysis, for clinical applications we expect that classification problems are more salient.
Specifically, a temporal sequence of tumor samples will be linked with a categorical or numeric label denoting the clinical management of the patient.
We would then like to predict patient outcomes or expect response to treatment using a discriminative supervised learning algorithm operating in $\Sigma_m$ or $\mathbb{P}\Sigma_m$.
Analogous to our use of $k$--medoids clustering for unsupervised grouping, the most basic algorithm for supervised learning is a $k$--nearest neighbor ($k$--NN) predictor.
In this algorithm the predicted label for a given point is generated by taking a majority or weighted vote over the labels of the $k$ nearest trees.
The optimal value of $k$ then specifies an order--$k$ Voronoi tesselation of the space that provides a description of the sizes of the predictive neighborhoods surrounding each element of the data set.
We use this classification algorithm to study clinical correlates of trees determined by tumor samples from glioma patients; see Figure~\ref{fig:gliomaTMZ}.


\section{Tree dimensionality reduction}\label{sec:treedimred}

When analyzing a large number of genomes, phylogenetic trees are often too complex to visualize and analyze as they can contain thousands of branches.
In this section, we will explain a technique for dimensionality reduction that projects a single tree in $\Sigma_m$ or $\mathbb{P}\Sigma_m$ to a ``forest'' of trees in $\Sigma_{n}$, for $n < m$.
The main idea is that by subsampling leaves of a large tree we can have a distribution of smaller trees that can capture properties of the more complex structure.
This procedure makes it easy to visualize and analyze high-dimensional data, and avoids scalability issues with algorithms for working with the spaces of phylogenetic trees.
We believe that the analysis and visualization of the resulting clouds of trees is an effective way to study high-dimensional evolutionary moduli spaces.
To provide theoretical justification for this claim, we prove that this procedure is stable, in the sense that it preserves distances up to a constant factor.

\subsection{Structured dimensionality reduction}

Let $\aE_m$ denote either $\Sigma_m$ or $\mathbb{P}\Sigma_m$.

\begin{definition}
For $S \subseteq \{1,\ldots,m\}$, define the tree projection function 
\[
\Psi_S \colon \aE_m \to \Sigma_{|S|}
\] 
by specifying $\Psi_S(T)$ to be the unique tree obtained by taking the full subgraph of $t$ on the leaves that have labels in $S$ and then deleting vertices of degree $2$.
An edge $e$ created by vertex deletion is assigned weight $w_1 + w_2$, where the $w_i$ are the weights of the incident edges for the deleted vertex.
(It is easy to check that the order of vertex deletion does not change the resulting tree.)
\end{definition}

A representative example of $\Psi_S$ is shown in Figure~\ref{fig:dimred_tree-dim-red}.
\begin{figure}
    \centering
    \includegraphics[width=6in]{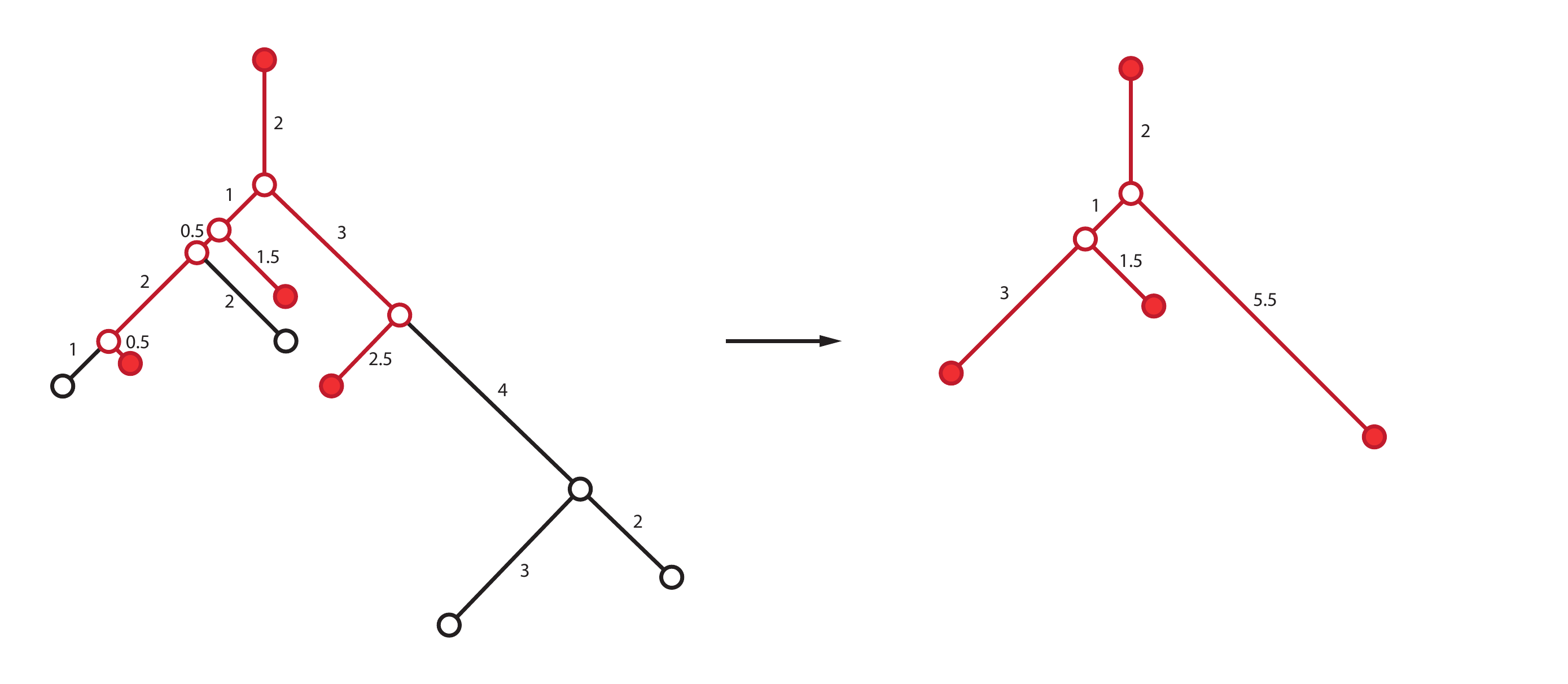}
    \caption{{\bf Tree dimensionality reduction.} The leaves that are not highlighted in the starting phylogeny are pruned and their external edges removed; internal vertices of degree $2$ are collapsed and edge weights on either side are summed.}
    \label{fig:dimred_tree-dim-red}
\end{figure} 

Using $\Psi$, we can describe a number of dimensionality reduction procedures.
The most basic example is simply to exhaustively subsample the labels.
Let $\aD(\Sigma_m)$ denote the set of distributions on $\Sigma_m$.

\begin{definition}[Tree dimensionality reduction]
For $1 \leq k < m$, define the map
\[
\Psi_k \colon \aE_m \to \aD(\Sigma_k)
\]
as the assignment that takes $T \in \aE_m$ to the empirical distribution induced by $\Psi_S$ as $S$ varies over all subsets of $\{1,\ldots,m\}$ of size $k$.
Define the map
\[
\Psi'_k \colon \aE_m \to \prod_{S \subseteq \{1,\ldots,m\}, |S| = k} \Sigma_k 
\]
as the map that takes $T \in \aE_m$ to the product of $\Psi_S(T)$ as $S$ varies over all subsets of $\{1, \ldots, m\}$ of size $k$.
\end{definition}

(In practice, we approximate $\Psi_k$ using Monte Carlo approximations.)

Often there is additional structure in the labels that can be exploited.
For instance, in many natural examples, the genomic data has a natural chronological ordering.
When this holds, sliding windows over the labels induces an ordering on subtrees generated by $\Psi_S$.
Rather than just regarding such a sequence as a distribution, the ordering makes it sensible to consider the associated trees as forming a piecewise-linear curve in $\Sigma_k$.
(Note that given a set of points in $\Sigma_k$ it is always reasonable to form the associated piecewise-linear curve because each pair of points is connected by a unique geodesic.)
Let $\aC_k(\Sigma_m)$ denote the set of piecewise linear curves in $\Sigma_m$; equivalently, $\aC_k(\Sigma_m)$ can be thought as the set of ordered sequences in $\Sigma_m$ of cardinality $k$.
A schematic example of this sequential operation is given in Figure~\ref{fig:dimred_def4}.

\begin{figure}
    \centering
    \includegraphics[width=4in]{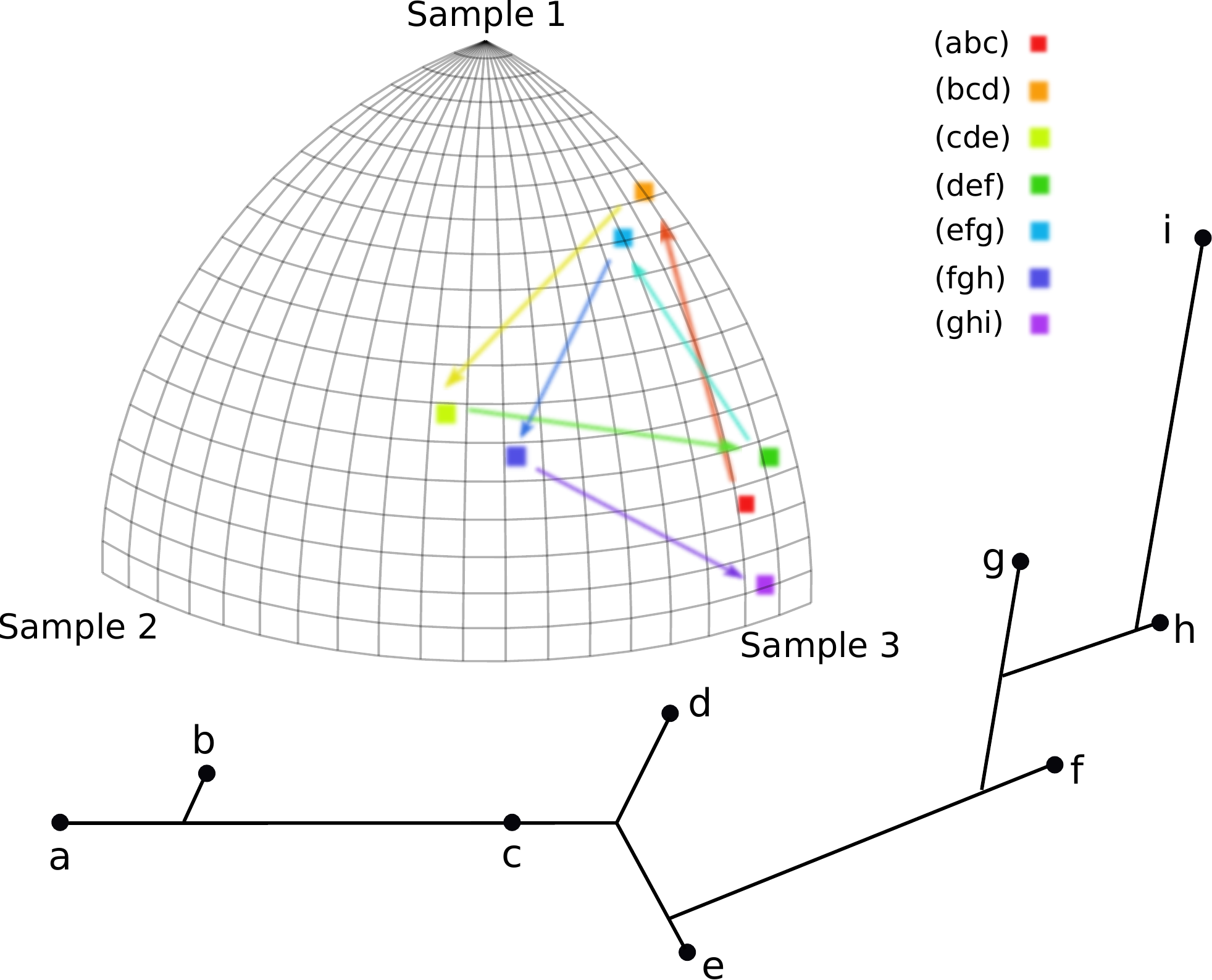}
    \caption{{\bf Sequential tree decomposition on a set of ordered samples.} The subsets of trees generated from the initial phylogeny respect the ordering on the leaves, as would be the case in a temporally ordered set of samples. Each subtree can be visualized on a common set of axes, to chart motion through time. Periodicity in the sequence of branch lengths, for example, might give rise to cycles in the evolutionary moduli space.}
    \label{fig:dimred_def4}
\end{figure} 

\begin{definition}[Sequential tree dimensionality reduction]
For $1 \leq k < m$, define the map
\[
\Psi_C \colon \aE_m \to \aC_{m-k}(\Sigma_k)
\]
as the assignment that takes $T \in \aE_m$ to the curve induced by $\Psi_S$ as $S$ varies over the subsets $\{1, \ldots, k\}$, $\{2, \ldots, k+1\}$, etc.

Equivalently, we can regard this as producing a map
\[
\Psi_C \colon \aE_m \to \aD(\Sigma_k).
\]
\end{definition}

There are many variants of $\Psi_C$ depending on the precise strategy for windowing that is employed.

\subsection{Tree dimensionality reduction and neighbor-joining}

We have described tree dimensionality reduction in terms of the map $\Psi$, which is an operation on tree spaces.
In practice, this technique would be applied by producing a very large phylogenetic tree from the raw data and subsequently applying $\Psi$.
However, there is an alternative form of tree dimensionality reduction that instead subsamples the raw data to produce smaller phylogenetic trees.
In this section, we discuss the relationship between these two procedures in the context of neighbor-joining.

Neighbor-joining is an algorithm for producing a tree from metric data (or more broadly, a dissimilarity measure)~\cite{felsenstein2003}.
That is, the input is a set of points $X$ and a metric $\partial_X \colon X \times X \to \mathbb{R}$.
One of the main theorems about consistency of neighbor-joining is that when $\partial_X$ is ``close'' to a tree metric $\partial_T$, neighbor-joining recovers $T$.
Recall that given a tree $T$, the associated metric $\partial_T$ is defined by taking the distance between leaves $i$ and $j$ to be 
\[
\partial_T(i,j) = \sum_{e \in P_{ij}} \ell(e),
\]
where $P_{ij}$ is the unique path in $T$ from $i$ to $j$ and $\ell(e)$ is the length of the edge $e$.

The specific consistency theorem we use is due to Atteson~\cite{atteson}: if
\[
\max_{x_i,x_j \in X} |\partial_X(x_i,x_j) - \partial_T(x_i,x_j)| \leq \frac{1}{2} \min_{e \in T} \ell(e),
\]
then neighbor-joining recovers $T$.
In this case we say that $(X, \partial_X)$ is consistent with $T$.

\begin{proposition}\label{prop:subnj}
If $(X,\partial_X)$ is consistent with $T$, for any subset $S = \{x_1, x_2, \ldots x_k\} \subseteq X$, the associated submetric space is consistent with $\Psi_S(T)$.
\end{proposition}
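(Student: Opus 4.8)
The plan is to reduce the statement to two things: first, that $\Psi_S(T)$ has the same restricted tree metric as $T$ restricted to the leaves in $S$, and second, that the restriction of $\partial_X$ to $S$ remains close enough to this restricted tree metric to satisfy Atteson's hypothesis. The first is a purely combinatorial fact about the operation $\Psi_S$; the second is essentially monotonicity of $\max$ and $\min$ under restriction, with a subtlety at the minimum edge length.

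First I would establish the metric compatibility: for all $x_i, x_j \in S$ we have $\partial_{\Psi_S(T)}(x_i,x_j) = \partial_T(x_i,x_j)$. This follows directly from the definition of $\Psi_S$. Deleting a leaf not in $S$ and its external edge does not affect any path between two leaves in $S$ (such a path never uses an external edge of a non-$S$ leaf). Suppressing a degree-$2$ vertex $v$ and replacing its two incident edges of weights $w_1, w_2$ by a single edge of weight $w_1 + w_2$ preserves the length of every path through $v$, and leaves all other path lengths unchanged. Since $P_{ij}$ for $i,j \in S$ is still the unique $i$--$j$ path in $\Psi_S(T)$, summing edge weights along it gives the same value in $T$ and in $\Psi_S(T)$. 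Hence $\partial_{\Psi_S(T)} = \partial_T|_{S \times S}$.

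Next I would verify Atteson's inequality for $(S, \partial_X|_{S\times S})$ and $\Psi_S(T)$. The left-hand side only decreases under restriction:
\[
\max_{x_i, x_j \in S} |\partial_X(x_i,x_j) - \partial_{\Psi_S(T)}(x_i,x_j)| = \max_{x_i,x_j \in S} |\partial_X(x_i,x_j) - \partial_T(x_i,x_j)| \leq \max_{x_i,x_j \in X} |\partial_X(x_i,x_j) - \partial_T(x_i,x_j)|,
\]
using the metric compatibility just established. For the right-hand side, I need $\frac{1}{2}\min_{e \in \Psi_S(T)} \ell(e) \geq \frac{1}{2}\min_{e \in T} \ell(e)$. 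This is where the one nontrivial point lies: every edge of $\Psi_S(T)$ is either an edge of $T$ (with the same weight) or a sum $w_1 + w_2$ of weights of edges of $T$ arising from suppressing degree-$2$ vertices; in either case its length is at least $\min_{e \in T}\ell(e)$ (here using that all edge weights in $T$ are nonnegative, indeed positive for the internal edges of a genuine phylogenetic tree, and that summing only increases the value). Therefore $\min_{e \in \Psi_S(T)}\ell(e) \geq \min_{e \in T}\ell(e)$. Combining the two bounds with the hypothesis that $(X,\partial_X)$ is consistent with $T$ gives
\[
\max_{x_i,x_j \in S} |\partial_X(x_i,x_j) - \partial_{\Psi_S(T)}(x_i,x_j)| \leq \frac{1}{2}\min_{e \in T}\ell(e) \leq \frac{1}{2}\min_{e \in \Psi_S(T)}\ell(e),
\]
so Atteson's theorem applies and neighbor-joining on $(S,\partial_X|_{S\times S})$ recovers $\Psi_S(T)$, i.e., the submetric space is consistent with $\Psi_S(T)$.

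The main obstacle is the edge-length comparison on the right-hand side: one must be careful that suppressing degree-$2$ vertices can only create longer edges, never shorter ones, which relies on nonnegativity of weights; and one should note a minor edge case, namely when $\Psi_S(T)$ degenerates (e.g.\ if $|S| \leq 2$, or the full subgraph on $S$ becomes a path), in which case the statement is either vacuous or trivially true since there are no internal edges to constrain. Everything else is bookkeeping: checking that $\Psi_S$ is well-defined (already granted in the definition, since the order of vertex deletions does not matter) and that paths and their lengths behave as claimed under the pruning and suppression operations.
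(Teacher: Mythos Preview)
Your proof is correct and follows essentially the same approach as the paper: bound the left-hand side of Atteson's inequality by monotonicity of the maximum under restriction, and bound the right-hand side using $\min_{e \in \Psi_S(T)} \ell(e) \geq \min_{e \in T} \ell(e)$. The paper's proof is much terser---it simply asserts both inequalities and says ``the result follows''---whereas you make explicit the intermediate fact $\partial_{\Psi_S(T)} = \partial_T|_{S\times S}$ that the paper uses silently when writing $\partial_T$ on both sides, and you justify the minimum-edge-length comparison via the nonnegativity of weights under edge concatenation.
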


\begin{proof}
It is clear from the definition of $\Psi_S(T)$ that 
\[
\min_{e \in \Psi_S(T)} \ell(e) \geq \min_{e \in T} \ell(e).
\]
On the other hand, 
\[
\max_{x_i,x_j \in S} |\partial_X(x_i,x_j) - \partial_T(x_i,x_j)| \leq
\max_{x_i,x_j \in X} |\partial_X(x_i,x_j) - \partial_T(x_i,x_j)| 
\]
The result follows.
\end{proof}

For a metric space $(X,\partial_X)$, let $T(X)$ denote the tree obtained from neighbor-joining applied to $(X,\partial_X)$.
As a consequence of Proposition~\ref{prop:subnj}, given a metric space $(X, \partial_X)$ that is consistent with $T(X)$, the distribution $\Psi_k(T(X))$ is identical to the distribution $\{T(X_S)\}$ where $S$ varies over all subsets of $X$ of cardinality $k$ and $X_S$ denotes the metric space structure on $S$ induced by $\partial_X$.

\begin{remark}
One can ask the same question for other methods of producing phylogenetic trees; the situation is substantially more complicated, and we intend to provide a detailed analysis in future work.
\end{remark}

\subsection{Stability of tree dimensionality reduction}

In order to apply tree dimensionality reduction in the face of potentially noisy data, we would like to know that small random perturbation of the original sample results in a distribution of subsamples that is ``close'' in some sense (e.g., small shifts in the centroid in $\Sigma_m$).
Conversely, if two distributions of subsamples are suitably close, we would like to be able to conclude that the sampled trees are also close.

We begin with a lemma describing the interaction of $\Psi_S$ and the boundaries of orthants.

\begin{lemma}\label{lem:rotprojcom}
Fix $S \subseteq \{1,\ldots,m\}$.
Let $T$ be a point in the interior of an orthant of $\Sigma_m$, and let $\gamma \colon [0,1] \to \Sigma_m$ be the geodesic path contained in that orthant from $T$ to $T'$, where $T'$ is obtained from $T$ by collapsing an interior edge to length $0$.
Then $\Psi_S(T')$ is obtained from $\Psi_S(T)$ by shrinking an interior edge, and $\Psi_S(\gamma)$ is the geodesic path from $\Psi_S(T)$ to $\Psi_S(T')$.
\end{lemma}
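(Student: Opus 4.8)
The plan is to exploit that the tree projection $\Psi_S$ is \emph{piecewise linear}: on each closed orthant of $\Sigma_m$ it restricts to a fixed linear map, so it carries the straight segment $\gamma$ to a straight segment, and a straight segment lying inside a single closed orthant of $\Sigma_{|S|}$ is automatically a geodesic.

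First I would make the linearity precise. Fix the closed orthant $\overline{O}$ containing $T$, equivalently the topology of the binary tree $T$. The leaves labelled by $S$ span a minimal subtree $T|_S \subseteq T$, and suppressing its degree-$2$ vertices produces $\Psi_S(T)$; this induces a surjection $\pi$ from the edge set of $T|_S$ onto the edge set of $\Psi_S(T)$ whose fibres are the maximal paths of $T|_S$ lying between consecutive branch vertices and $S$-leaves, with $\ell(f) = \sum_{e' \in \pi^{-1}(f)} \ell(e')$ for each edge $f$ of $\Psi_S(T)$. Both the set of edges of $T$ lying in $T|_S$ and the partition $\pi$ depend only on the topology of $T$, hence only on $\overline{O}$ --- this is exactly the content of the remark that the order of vertex deletions is irrelevant. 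Consequently $\Psi_S$ restricted to $\overline{O}$ is a linear map $L$ each of whose output coordinates is the sum of a fixed subset of input coordinates, and $L$ sends $\overline{O}$ into the closure $\overline{O'}$ of the single orthant of $\Sigma_{|S|}$ cut out by the topology of $\Psi_S(T)$.

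Next I would push $\gamma$ through $L$. By hypothesis $\gamma(t) = (1-t)\,T + t\,T'$, the straight segment in $\overline{O}$ differing from $T$ only in the coordinate of the collapsed interior edge $e$; hence $\Psi_S(\gamma(t)) = (1-t)\,\Psi_S(T) + t\,\Psi_S(T')$ is a straight segment inside $\overline{O'}$. To pin down the affected edge, note that only the $e$-coordinate of the input moves: if $e \notin T|_S$ then $\Psi_S$ is constant along $\gamma$ and $\Psi_S(T') = \Psi_S(T)$, while if $e \in T|_S$ then exactly the coordinate of $f = \pi(e)$ decreases, by $\ell(e)$. The split of $S$ determined by $f$ is the restriction to $S$ of the split of $\{1,\dots,m\}$ determined by $e$; when both blocks of that restricted split have at least two elements, $f$ is an interior edge of $\Psi_S(T)$ and $\Psi_S(T')$ is $\Psi_S(T)$ with this interior edge shrunk --- shrunk all the way to length $0$, onto a proper face of $\overline{O'}$, precisely when $\pi^{-1}(f) = \{e\}$. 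The remaining possibilities ($f$ external, or $e \notin T|_S$) are handled identically and only move, or fix, external-edge coordinates.

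Finally, since by construction the distance in $\Sigma_{|S|}$ between two points of a common closed orthant equals the Euclidean distance in that orthant, the segment $\Psi_S(\gamma) \subseteq \overline{O'}$ has length $d_{\Sigma_{|S|}}(\Psi_S(T), \Psi_S(T'))$ and hence is a geodesic; since $\Sigma_{|S|}$ is $\CAT(0)$, geodesics are unique, so $\Psi_S(\gamma)$ is \emph{the} geodesic between its endpoints. I do not expect a serious obstacle: the step needing the most care is the bookkeeping that the edge-merging data $\pi$ is constant on $\overline{O}$ --- this is precisely what makes $\Psi_S$ linear there --- after which both conclusions follow formally, using only that $\Sigma_{|S|}$ is $\CAT(0)$ and that its closed orthants carry the Euclidean metric.
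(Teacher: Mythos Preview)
Your argument is correct and takes a genuinely different route from the paper.  The paper argues by direct case analysis on the collapsed edge $e=(v_1,v_2)$: it separately treats the case where $e$ is absent from the $S$-subtree, the case where $e$ survives unmerged in $\Psi_S(T)$, and the case where $e$ is concatenated with a neighbour via deletion of $v_1$ or of $v_2$, checking in each that collapsing then projecting coincides with projecting then shrinking; the geodesic claim is then declared ``clear.''  You instead isolate the structural fact that $\Psi_S$ restricts to a single linear map on the closed orthant $\overline{O}$, encoded by the edge-merging surjection $\pi$, and both conclusions fall out formally: linearity carries the straight segment $\gamma$ to a straight segment in a single closed orthant of $\Sigma_{|S|}$, which is automatically the geodesic by $\CAT(0)$ uniqueness.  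The paper's approach is more hands-on and makes the fate of $e$ explicit in each situation; yours packages the combinatorics once and for all, makes the geodesic assertion an immediate corollary rather than a separate check, and would also streamline the length-expansion estimate that follows in the paper (Proposition~\ref{prop:projcont}), since the expansion factor is visibly bounded by the maximal fibre size of $\pi$.  One small point where your treatment is in fact more careful than the lemma's wording: you correctly note that $\pi(e)$ can be an external edge of $\Psi_S(T)$, or that $e$ may lie outside $T|_S$ altogether, so ``shrinking an interior edge'' should be read as ``shrinking at most one edge (possibly external, possibly by zero).''  The paper's own case analysis accommodates these degenerate cases too, but the statement as written is slightly stronger than what is actually proved or needed.
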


\begin{proof}
It suffices to show that $\Psi_{S}(T')$ is obtained from $\Psi_S(T)$ by shrinking an edge; given this, the assertion about $\gamma$ is clear.
Let $e = (v_1,v_2)$ denote the edge to be collapsed, with $v_1$ the vertex closer to the root and $v_2$ the vertex closer to the leaves.
There are three possibilities.
If the edge $e$ is not present in $\Psi_S(T)$, then this means that none of the leaves below $e$ are in $S$; as a consequence, none of the leaves below $v_1$ in $T'$ are in $S$, and so $\Psi_S(T')$ will also not contain $e$ and so $\Psi_S(T) = \Psi_S(T')$.
If the edge $e$ is present in $\Psi_S(T)$ and does not participate in a vertex collapse, this means that both edges emanating from $v_2$ are present in $\Psi_S(T)$ and therefore that collapsing $e$ to $0$ commutes with applying $\Psi_S$.
Finally, if applying $\Psi_S$ to $T$ causes $e$ to be concatenated with another edge, then there are two cases to analyze --- $e$ could be concatenated via the deletion of $v_1$ or $v_2$.
Suppose that the concatenation occurs because the other ``downward'' edge with endpoint $v_2$, which we will denote $e'$, leads to leaves that are not in $S$.
If we collapse $e$ to $0$ before applying $\Psi_S$, $e'$ will still be deleted when we apply $\Psi_S$, and so the result will be the same.
The case of deletion of $v_1$ is analogous.
\end{proof}

In light of Lemma~\ref{lem:rotprojcom}, the projection $\Psi_S$ preserves paths.
The other thing we need to understand is the potential increase in length caused by applying $\Psi_S$.
Specifically, we need to consider the impact of the addition of edge lengths that occurs when a degree 2 vertex is produced by the reduction process.
In the simplest case, we are considering the map $\mathbb{R}^2 \to \mathbb{R}$ specified by $(x_1,x_2) \mapsto x_1+x_2$, and in general, we are looking at $\mathbb{R}^n \mapsto \mathbb{R}$ specified by $(x_1, x_2, \ldots, x_n) \mapsto \sum_{i=1}^n x_i$.
Squaring both sides, it is clear that
\[
\partial_{\mathbb{R}^n}((x_i), (y_i))^2 \leq \partial_{\mathbb{R}}(\sum_{i=1}^n x_i, \sum_{i=1}^n y_i)^2.
\]
On the other hand, since 
\[
\partial_{\mathbb{R}}(\sum_{i=1}^n x_i, \sum_{i=1}^n y_i) \leq n(\max_i |x_i - y_i|) \leq n \partial_{\mathbb{R}^n}((x_i), (y_i)),
\]
the addition of edge lengths can result in an expansion bounded by the size of the sum.

\begin{remark}
Another way to interpret the previous result is to observe that the addition map is an isometry for the Manhattan distance (when working in the positive orthant) but not for the Euclidean distance.
\end{remark}

For a rooted tree $T$, let $\depth(T)$ denote the length of the longest path from a leaf to the root.

\begin{proposition}\label{prop:projcont}
Let $S \subseteq \{1,\ldots,m\}$ such that $|S| > 1$ and let $\gamma \colon [0,1] \to \aE_m$ be a path from $T$ to $T'$.
Then $\gamma \circ \Psi_S \colon [0,1] \to \Sigma_{|S|}$ is a path from $\Psi_S(T)$ to $\Psi_S(T')$ and $|\gamma'| \leq \max(\depth(T), \depth(T')) |\gamma|$.
\end{proposition}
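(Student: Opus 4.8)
The plan is to bootstrap from Lemma~\ref{lem:rotprojcom}, which guarantees that $\Psi_S$ respects the orthant structure, together with the elementary expansion estimate for the edge-summation map recorded just above the statement. The first assertion — that $\gamma' := \Psi_S\circ\gamma$ is a path from $\Psi_S(T)$ to $\Psi_S(T')$ — is the easy half: $\Psi_S$ is linear, hence continuous, on each orthant, and because it is defined intrinsically on trees rather than through a choice of orthant coordinates these linear pieces agree on shared faces, so $\Psi_S$ is continuous on all of $\aE_m$ and a composite of continuous maps is a path with the stated endpoints.

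For the length bound I would first reduce to an orthant-local statement. Since length is the supremum over partitions $0=s_0<\dots<s_N=1$ of $\sum_j d_{\Sigma_{|S|}}(\gamma'(s_{j-1}),\gamma'(s_j))$, a standard local-to-global argument (cover the compact set $\gamma([0,1])$ by finitely many stars of cells, on each of which $\Psi_S$ is Lipschitz, and subdivide using a Lebesgue number) reduces the claim to: for $t_1,t_2$ lying in a common orthant $O$ met by $\gamma$,
\[
d_{\Sigma_{|S|}}(\Psi_S(t_1),\Psi_S(t_2)) \le \max(\depth(T),\depth(T'))\, d_{\aE_m}(t_1,t_2).
\]
Lemma~\ref{lem:rotprojcom} is what makes this clean across orthant boundaries (and in particular reconfirms that $\gamma'$ is genuinely a path).

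On a fixed orthant $O$, $\Psi_S$ is exactly the linear map that discards the coordinates of the pruned external edges (and of the internal edges absent from the image topology) and sends each surviving edge of the image to the sum of the chain of $O$-edges merged to form it. These chains are pairwise disjoint, so applying the estimate $\partial_\mathbb{R}(\sum_i x_i,\sum_i y_i)\le n\max_i|x_i-y_i|$ chain by chain — or sharpening it with Cauchy--Schwarz — bounds the Lipschitz constant of $\Psi_S$ on $O$ by the largest number of $O$-edges merged into one image edge. A merged chain lies along a root-to-leaf path, so this number is at most the depth of the corresponding tree, and one is left to argue that every orthant $\gamma$ visits has merging depth at most $\max(\depth(T),\depth(T'))$.

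I expect that last point to be the real obstacle. For an unconstrained path $\gamma$ the intermediate topologies — and hence their merging depths — need not be controlled by those of the endpoints at all, so the bound as stated really wants $\gamma$ to be replaced by a geodesic; along a $\CAT(0)$ geodesic in $\aE_m$ the internal edges that appear are assembled from the edge sets of $T$ and $T'$, which makes the merging structure at an intermediate tree a common coarsening of those at $T$ and $T'$ and so keeps its depth under $\max(\depth(T),\depth(T'))$. Nailing down this structural property of the geodesic, and tracking how the collapses and re-expansions it performs act on root-to-leaf chains, is the delicate bookkeeping; the remaining ingredients — continuity of $\Psi_S$, the group-sum estimate, the local-to-global reduction — are routine given Lemma~\ref{lem:rotprojcom} and the discussion preceding the statement.
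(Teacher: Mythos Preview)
Your overall architecture --- reduce to a within-orthant Lipschitz estimate for $\Psi_S$, then patch along $\gamma$ --- is essentially the paper's own. The stylistic difference is that where you invoke a Lebesgue-number/covering argument to localize, the paper instead perturbs $\gamma$ (at cost $\epsilon$) to a nearby path crossing only codimension-$1$ faces, handles the cone path separately, and passes to the limit; both devices deliver the same orthant-local problem and both finish with the group-sum estimate together with Lemma~\ref{lem:rotprojcom}.

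You are right to flag the depth of the intermediate orthants as the real obstacle, and you should know that the paper does not resolve it either. After its reduction the paper simply asserts that ``the discussion preceding the proposition implies that the potential expansion in length is $\max(\depth(T),\depth(T'))$,'' with no argument for why the orthants traversed by $\gamma$ cannot have larger merging depth than the endpoints. For an \emph{arbitrary} path this need not hold --- one can route $\gamma$ through a caterpillar orthant between two balanced endpoints --- so the proposition as stated is really only safe for the geodesic, which is exactly how it is invoked downstream in Theorem~\ref{thm:stab} and Lemma~\ref{lem:converse}. Your instinct to restrict to the geodesic and exploit its combinatorial structure (intermediate edges being compatible with edges of $T$ or $T'$) is the natural repair; the paper leaves that step unaddressed.
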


\begin{proof}
First, observe that if $T$ and $T'$ are in the same orthant of $\aE_m$, the result is clear.
In this case, for any $S$, $\Psi_S(T)$ and $\Psi_S(T')$ will be in the same orthant of $\Sigma_{|S|}$.
By the discussion above, the length of the projected path in that orthant is bounded by the length of the path in $\aE_m$ scaled by the depth of the tree.
This argument also shows that result holds for trees joined by the cone path; $\Psi_S$ applied to the cone point produces the cone point.

Now suppose that $T$ and $T'$ are not in the same orthant and neither $T$ nor $T'$ is contained in a positive codimension subspace of $\aE_m$ (i.e., they are not on the boundary of any orthant).
Further, we assume that $\gamma$ does not go through the origin and that $\gamma$ can be expressed in terms of a sequence of contractions and expansions of a single edge.
That is, we assume that $\gamma$ only goes through codimension $1$ faces of each orthant.
It suffices to consider this case, since for a general $\gamma$ that does not go through the origin but might traverse faces of codimension larger than $1$, observe that for any $\epsilon > 0$, we can perturb $\gamma$ to produce a path $\gamma'$ with the same endpoints which satisfies the hypothesis above and has $|\gamma'| = |\gamma| + \epsilon$.
Passing to limits then implies that the bound holds for such a path.
Similarly, a limit argument implies the result for a path that starts or ends on a positive codimension subspace of $\aE_m$.
Moroever, given this case, more complicated paths that involve both rotations and also pass through the origin satisfy the bound by an easy induction.

Thus, fix a subset $S \in \{1,\ldots,m\}$.
Lemma~\ref{lem:rotprojcom} now implies that $\Psi_S(\gamma)$ is a path from $\Psi_S(T)$ to $\Psi_S(T')$, and the discussion preceding the proposition implies that the potential expansion in length is $\max(\depth(T),\depth(T'))$.
\end{proof}

\begin{remark}
In fact, the expansion factor in Proposition~\ref{prop:projcont} depends on the number of edge conactenations that occur when $\Psi_S$ is applied; in situations where an estimate of this is available, tighter bounds can be used.
\end{remark}

Using Proposition~\ref{prop:projcont}, it is straighforward to deduce the next two theorems that provide the theoretical support for the use of tree dimensionality reduction.
The following theorem is an immediate consequence of Proposition~\ref{prop:projcont}, choosing the path realizing the distance between $T$ and $T'$.

\begin{theorem}\label{thm:stab}
For $T,T' \in \aE_m$ such that $d_{\aE_m}(T,T') \leq \epsilon$, then for any $S \subseteq \{1,\ldots,m\}$ such that $|S| > 1$, 
\[
d_{\Sigma_{|S|}}(\Psi_S(T), \Psi_S(T')) \leq \max(\depth(T),\depth(T')) \epsilon.
\]
Moreover, this bound is tight.
\end{theorem}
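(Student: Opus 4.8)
The plan is to derive Theorem~\ref{thm:stab} as a direct corollary of Proposition~\ref{prop:projcont}, and then to separately verify the tightness claim by exhibiting an explicit pair of trees achieving the bound. For the upper bound, first I would recall that $\aE_m$ (being either $\Sigma_m$ or $\mathbb{P}\Sigma_m$) is a complete geodesic metric space, so that for any $T, T'$ with $d_{\aE_m}(T,T') \leq \epsilon$ there is a geodesic path $\gamma \colon [0,1] \to \aE_m$ from $T$ to $T'$ with $|\gamma| = d_{\aE_m}(T,T') \leq \epsilon$. Applying Proposition~\ref{prop:projcont} to this $\gamma$ yields that $\Psi_S(\gamma)$ is a path in $\Sigma_{|S|}$ from $\Psi_S(T)$ to $\Psi_S(T')$ of length at most $\max(\depth(T),\depth(T'))\,|\gamma| \leq \max(\depth(T),\depth(T'))\,\epsilon$. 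Since the distance $d_{\Sigma_{|S|}}(\Psi_S(T),\Psi_S(T'))$ is the infimum of lengths of paths joining these two points, it is bounded above by the length of this particular path, which gives the inequality.

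For tightness, the strategy is to choose $S$ and a pair of trees so that the reduction $\Psi_S$ concatenates a chain of edges all of whose endpoints move in the same direction, turning the Euclidean factor-of-$n$ slack in the addition map into an equality. Concretely, I would take a rooted caterpillar-type tree $T$ in which a single leaf $\ell \in S$ is separated from the rest of $S$ by a path of $d = \depth(T)$ unit-length internal edges, with all other leaves (not in $S$) pendant along this path so that $\Psi_S$ deletes them and concatenates the whole chain into one edge of length $d$. Letting $T'$ differ from $T$ by uniformly scaling each of those $d$ chain edges by a factor $1+s/d$ for small $s>0$ (leaving everything else fixed), one computes $d_{\aE_m}(T,T') = \sqrt{d \cdot (s/d)^2} = s/\sqrt{d}$, while $\Psi_S(T)$ and $\Psi_S(T')$ differ only in the length of the single concatenated edge, by $d \cdot (s/d) = s$, so $d_{\Sigma_{|S|}}(\Psi_S(T),\Psi_S(T')) = s = \sqrt{d}\cdot d_{\aE_m}(T,T') = \max(\depth(T),\depth(T'))\,d_{\aE_m}(T,T')$ (noting $\depth(T') = \depth(T) = d$ when $s$ is chosen small, or more simply rescaling so the depths match). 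This realizes the bound with $\epsilon = s/\sqrt{d}$, showing the constant $\max(\depth(T),\depth(T'))$ cannot be improved.

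The main obstacle I anticipate is entirely in the tightness half: one must exhibit a configuration where (i) the number of edge concatenations under $\Psi_S$ is exactly $\depth(T)$, matching the depth and not merely bounded by it, and (ii) the perturbation is aligned so that all $n$ summands in $\sum x_i \mapsto \sum y_i$ move coherently, hitting the extreme case $\partial_{\mathbb{R}}(\sum x_i,\sum y_i) = n\,\partial_{\mathbb{R}^n}((x_i),(y_i))$ from the displayed chain of inequalities preceding Proposition~\ref{prop:projcont}. Care is needed to ensure the perturbation stays within a single orthant (so that Lemma~\ref{lem:rotprojcom} is not even invoked and the computation is a pure Euclidean one), and, in the $\mathbb{P}\Sigma_m$ case, that one either works in $\Sigma_m$ or renormalizes consistently; but since the definition only requires the bound to be achieved for \emph{some} $T,T',S$, the clean caterpillar example above suffices and the remaining verifications are routine.
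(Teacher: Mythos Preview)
Your derivation of the upper bound is exactly the paper's argument: the paper states that the theorem ``is an immediate consequence of Proposition~\ref{prop:projcont}, choosing the path realizing the distance between $T$ and $T'$,'' and you have spelled this out correctly.

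Your tightness argument, however, contains a computational slip. In your caterpillar example with $d$ unit-length chain edges each perturbed by $s/d$, you correctly compute $d_{\aE_m}(T,T') = s/\sqrt{d}$ and $d_{\Sigma_{|S|}}(\Psi_S(T),\Psi_S(T')) = s$, giving an expansion ratio of $\sqrt{d}$. You then assert this equals $\max(\depth(T),\depth(T')) = d$, which is false for $d>1$. Indeed, the chain of inequalities displayed just before Proposition~\ref{prop:projcont},
\[
\partial_{\mathbb{R}}\Bigl(\sum_i x_i,\ \sum_i y_i\Bigr) \;\leq\; n\,\max_i |x_i - y_i| \;\leq\; n\, \partial_{\mathbb{R}^n}\bigl((x_i),(y_i)\bigr),
\]
is never simultaneously sharp at both steps for $n>1$: the first becomes an equality when all $|x_i-y_i|$ coincide, while the second becomes an equality only when all but one $x_i-y_i$ vanish. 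The sharp bound for $\bigl|\sum_i(x_i-y_i)\bigr|$ against the Euclidean norm is $\sqrt{n}$ (Cauchy--Schwarz), and your equal-perturbation example attains exactly that. So your construction shows the constant cannot be lowered below $\sqrt{\depth}$, but it does \emph{not} exhibit the stated constant $\depth$; within a single orthant the Euclidean computation above shows it cannot be attained there. To establish tightness as stated you would need either a genuinely different example---perhaps one where the geodesic crosses orthant boundaries so that the single-orthant Cauchy--Schwarz obstruction does not apply---or a clarification of what ``tight'' is intended to mean in this statement.
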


Let $A$ and $B$ be subsets of $\aE_m$ such that each item of $A$ and $B$ has a label in $L \subset \aP(\{1,\ldots,m\})$ (where $\aP(-)$ denotes the power set of $\{1,\ldots,m\}$).
Then we can define a matching distance as
\[
d_{M,L}(A,B) = \max_{S \in L} d_{\aE_m}(A(S), B(S)).
\]

Without assuming such a labelling, we define the matching distance between $A$ and $B$ to be 
\[
d_M(A,B) = \min_{\phi} \max_{a \in A} d_{\aE_m}(a,\phi(a)),
\]
where $\phi$ varies over all bijections $A \to B$.

The following is now also immediate from Proposition~\ref{prop:projcont}.

\begin{lemma}\label{lem:converse}
For $T, T' \in \aE_m$ and $L \subset \aP(\{1,\ldots,m\})$, , 
\[
d_{\aE_m}(T,T') \geq \left(\frac{1}{\max(\depth(T), \depth(T'))}\right) d_{M,L}(\Psi_{S \in L}(T), \Psi_{S \in L}(T')).
\]
\end{lemma}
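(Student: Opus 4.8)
The plan is to read this off directly from Proposition~\ref{prop:projcont} by treating one subset at a time and then maximizing over $L$. First I would fix $S \in L$; subsets with $|S| \leq 1$ produce a one-leaf tree and contribute distance $0$, so only $|S| > 1$ matters. Let $\gamma \colon [0,1] \to \aE_m$ be a geodesic from $T$ to $T'$, which exists because $\aE_m$ (being $\Sigma_m$ or $\mathbb{P}\Sigma_m$, hence a complete $\CAT(0)$ geodesic metric space by Section~\ref{sec:phylospace}) has this property, and $|\gamma| = d_{\aE_m}(T,T')$. Proposition~\ref{prop:projcont} then says that $\Psi_S \circ \gamma$ is a path in $\Sigma_{|S|}$ from $\Psi_S(T)$ to $\Psi_S(T')$ of length at most $\max(\depth(T),\depth(T'))\,|\gamma|$; since the distance between two points is at most the length of any path joining them, this gives
\[
d_{\Sigma_{|S|}}(\Psi_S(T),\Psi_S(T')) \leq \max(\depth(T),\depth(T'))\, d_{\aE_m}(T,T').
\]
(This is of course exactly Theorem~\ref{thm:stab} applied with $\epsilon = d_{\aE_m}(T,T')$.)

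The key point in passing to $d_{M,L}$ is that the expansion constant $\max(\depth(T),\depth(T'))$ furnished by Proposition~\ref{prop:projcont} depends only on the ambient pair $T,T'$, not on $S$. So the displayed bound holds uniformly over all $S \in L$, and taking the maximum over $S \in L$ and unwinding the definition of $d_{M,L}$ (interpreting the metric in $d_{M,L}$ componentwise as the $\Sigma_{|S|}$ metric on the $S$-labelled entries) yields
\[
d_{M,L}(\Psi_{S\in L}(T),\Psi_{S\in L}(T')) = \max_{S\in L} d_{\Sigma_{|S|}}(\Psi_S(T),\Psi_S(T')) \leq \max(\depth(T),\depth(T'))\, d_{\aE_m}(T,T').
\]
Dividing through by $\max(\depth(T),\depth(T'))$ then gives the asserted inequality.

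I expect no real obstacle here beyond bookkeeping, since Proposition~\ref{prop:projcont} already bundles the path-lifting property of Lemma~\ref{lem:rotprojcom} with the edge-concatenation length estimate. The only things I would take care to state are the existence of the distance-realizing geodesic $\gamma$, the $S$-independence of the constant, and the trivial handling of singleton labels and of the degenerate case $\max(\depth(T),\depth(T')) = 0$. If sharper constants are wanted, the remark following Proposition~\ref{prop:projcont} lets one replace $\max(\depth(T),\depth(T'))$ by a bound on the number of edge concatenations induced by the $\Psi_S$ for $S \in L$.
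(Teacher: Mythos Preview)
Your proposal is correct and follows exactly the paper's approach: the paper states that the lemma is immediate from Proposition~\ref{prop:projcont}, and you have simply written out that immediate deduction, applying the proposition to a geodesic $\gamma$ for each $S$, observing that the expansion constant is independent of $S$, and then taking the maximum over $L$.
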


\begin{figure}
    \begin{subfigure}{\linewidth}
    \centering
    \includegraphics[width=0.7\linewidth]{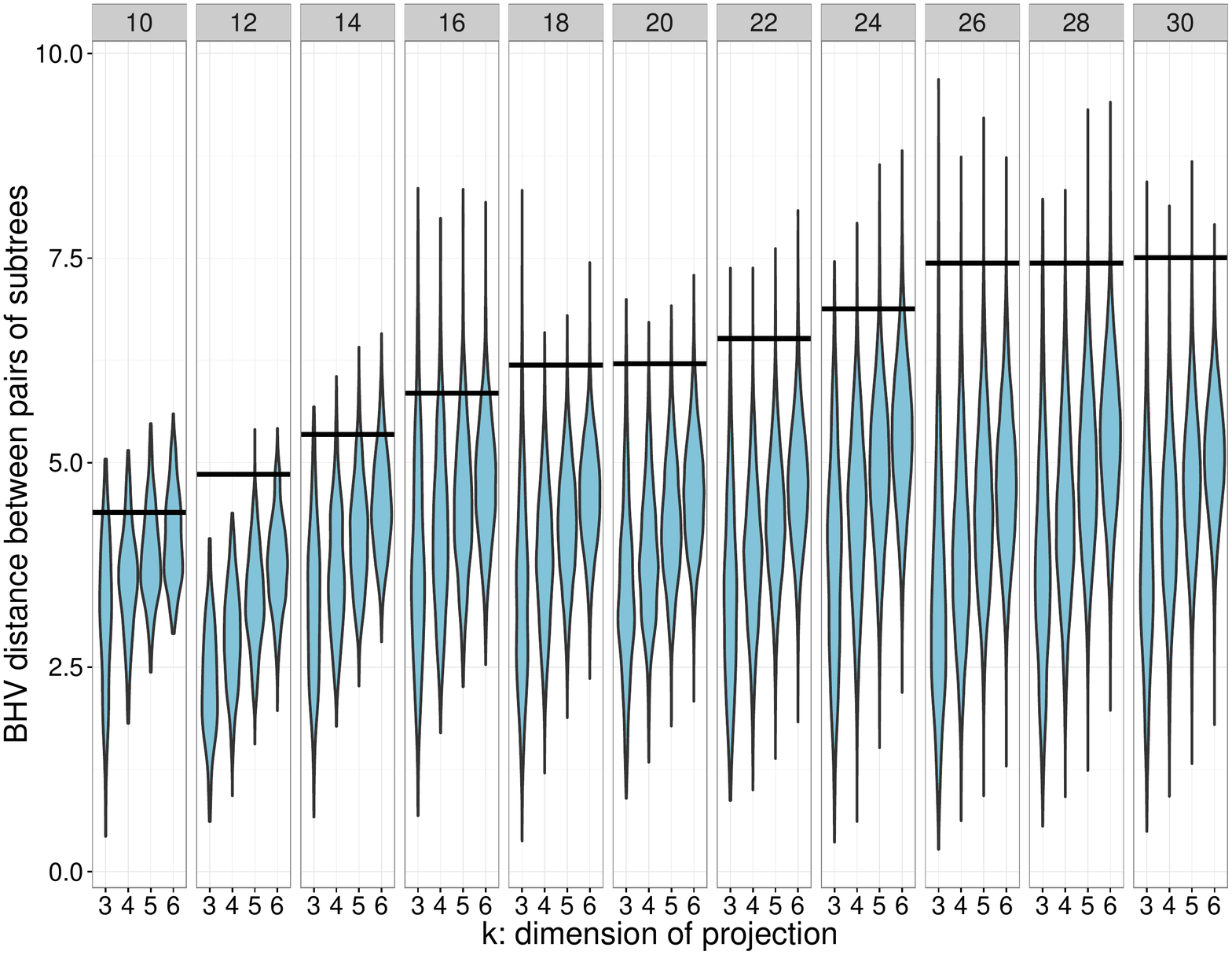}
    \end{subfigure}
    
    \begin{subfigure}{\linewidth}
    \centering
    \includegraphics[width=0.7\linewidth]{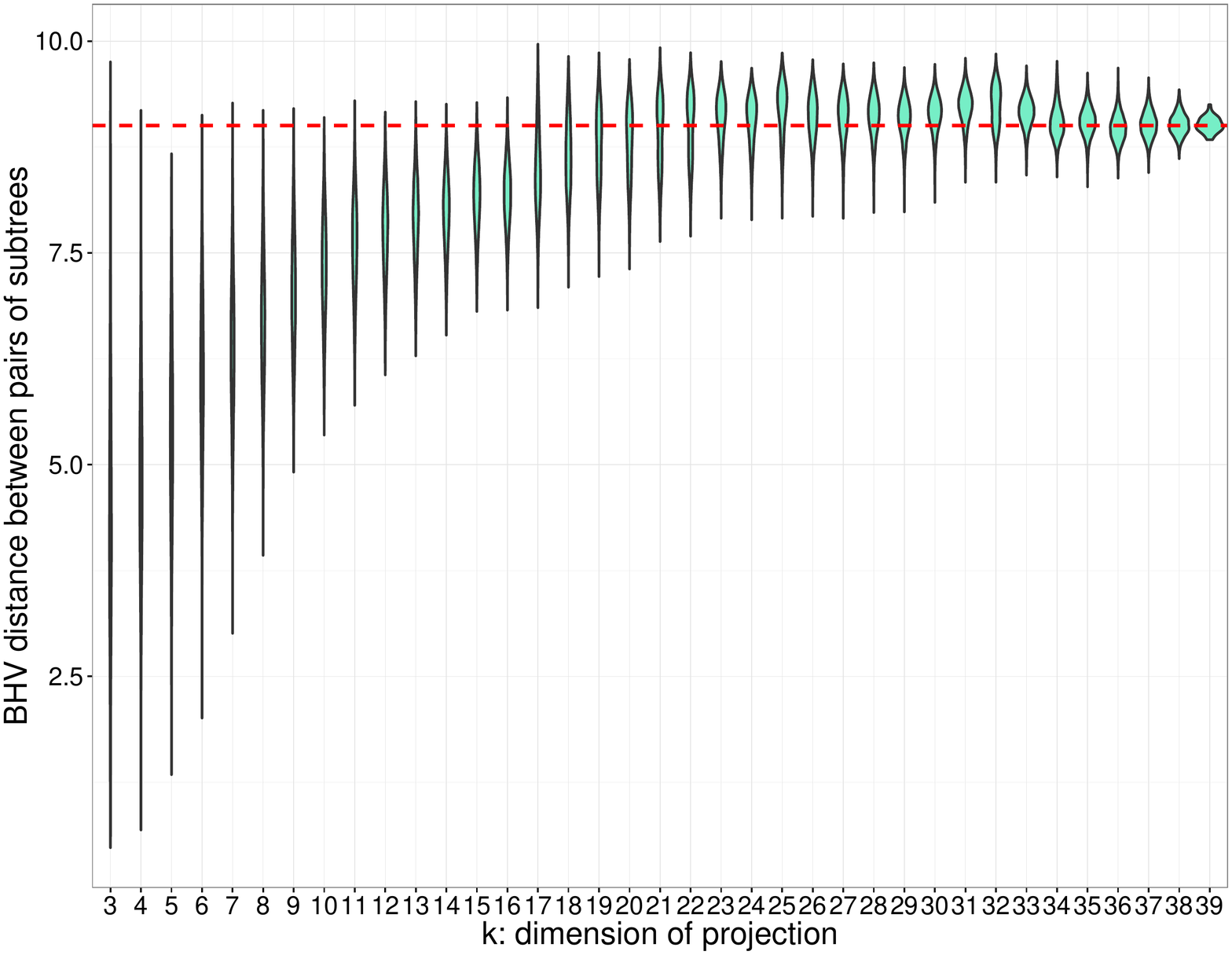}
    \end{subfigure}

    \caption{{\bf Distributions of subsample distances under the tree projection operation.} (Top) Pairs of $m$-dimensional phylogenies with known distance (horizontal black lines) are projected into distributions of low dimensional trees. The distances between elements of the projections can exceed the original inter-phylogeny distance ($\epsilon$), but rarely approach the upper bound. (Bottom) As the dimension of the projection operator approaches that of the initial phylogenies, there is a decrease in the variance of the distribution of subtree distances and its median approaches the 40-dimensional $d_{BHV}(T, T')$.}
    \label{fig:dimred_calibration}
\end{figure}

We conclude the discussion by describing some computational results on simulated data that illustrates the divergence between $d_M$ and $d_{\aE_m}(T,T')$; see Figure~\ref{fig:dimred_calibration}.
To demonstrate the stability of the tree projection operation, and to empirically test its approach of the distance bound, we constructed a panel of pairs of $m$-dimensional trees, $m > 10$.
The distances between the pairs of trees were computed and compared against the distributions of distances induced by the $k$-dimensional projection operator, $\Psi_k$ with $k \in {3,4,5,6}$.
The distances between elements of the projections can exceed the original inter-phylogeny distance ($\epsilon$), but rarely approach the upper bound.
To further characterize the behavior of the distribution of subtree distances as a function of $k$, we compared the distributions of projected distances ranging from $k=3$ to $k=m-1$ for a fixed pair of $m$-dimensional phylogenies.
We see that the median approaches the true $m$-dimensional $d_{BHV}(T, T')$ with larger values of $k$, while the variance appears to decrease monotonically.

\subsection{Using tree dimensionality reduction for inference and machine learning}

Broadly speaking, the various tree dimensionality reduction operators transform questions about comparison of trees or analysis of finite sets of trees to questions about comparisons and analysis of sets of clouds of trees.
This has several advantages.
First, when working with $\mathbb{P}\Sigma_m$, the resulting clouds live in $\Sigma_k$, and as discussed above analysis in the non-projectivized space can be simpler.
Second, when projecting to $\Sigma_3$ or $\Sigma_4$, both visualization and analysis is easier (particularly in $\Sigma_3$, since that has a Euclidean metric).
For example, in order to perform supervised classification on a labelled set of trees $X$, we can simultaneously solve classification problems in $\Psi'_k(X)$ and use majority voting in order to assign labels to new trees.
Another possibility is to perform hierarchical clustering on the clouds in $\Psi'_k(X)$, resulting in another tree, and use the distance in tree space as a test statistic to discriminate between clouds.
In Section~\ref{sec:flu} below we describe an application that involves producing a predictor for influenza vaccine effectiveness using the variance of the distribution produced by $\Psi$.

\begin{remark}
Another interesting direction of research is to consider the use of topological data analysis summaries (i.e., hierarchical clustering dendrograms or barcodes) for the clouds of projected points.
The stability results above easily imply stability results for the associated barcodes of the projections.
We intend to return to this subject in a subsequent paper.
\end{remark}

\section{Clinical utility of longitudinal tumor genomics}\label{sec:cancer}

Progression of cancer is believed to be intimately related to the accumulation of genomic alterations in tumor cells~\cite{nowell1976clonal}.
Mutations can spur proliferation, either via activation of an oncogene or inactivation of a tumor suppressor.
The spatial and temporal heterogeneity of tumors can be addressed by reconstructing the evolutionary history of tumors from different samples.
For each location and time point, one can define (partially or totally) the genotype of the dominant clone.
As the evolution proceeds in a clonal fashion, the relationships between dominant clones can be structured as a phylogenetic tree.
Questions about the nature of the evolutionary process, mechanisms of resistance, stratification of patient tumor histories, or prognosis can be formulated as a comparison between sets of trees.

For example, a first step towards personalizing cancer therapy is to monitor the mutational status of patients along the therapeutic course.
Genomic snapshots before and after administration of cytotoxic therapy can reveal the extent of population remodeling.
A further goal is to establish \textit{in vivo} mouse models of every patient's tumor, as a means of rapidly exploring drug susceptibility and resistance.
Such models can be created by direct implantation of human tumor tissue into immunodeficient mice and are termed patient-derived xenografts (PDX).

In this section, we describe three applications of the mathematical machinery of evolutionary moduli spaces.
We begin with the medically relevant problem of how therapy affects the evolution of common leukemias.
Then, using public data from relapsed glioma, we highlight clinical correlates to the spatial distribution of patients in the moduli space.
Finally, we study single cell data from breast cancer derived xenografts to observe the departure of the tumor genetics from the primary lesion.

\subsection{Evolution of chronic lymphocytic leukemia under therapy}\label{sec:CLL}

Chronic lymphocytic leukemia (CLL) is the most common leukemia in adults, primarily affecting the elderly population (median age at diagnosis is 70)~\cite{smith2011incidence}.
CLL is a proliferative disorder of B-lymphoctyes characterized by a steady accumulation of clonal, non-functional B-cells.
Treatment strategies vary greatly given the heterogeneity in disease course, ranging from watchful waiting, to localized radiation, to systemic chemotherapy.
The fact that CLL is a relatively indolent malignancy makes it an excellent model for studying clonal evolution under different therapeutic strategies~\cite{wang2015tumor}.

A recent genomic study~\cite{landau2013evolution} performed whole exome sequencing on 160 CLL cases covering the spectrum of clinical courses the disease can take.
This data established a space of recurrent alterations, which was then used to genotype 18 patients for whom two time points were available.
Of these 18 patients, 10 of 12 treated with chemotherapy underwent clonal evolution compared to only 1 of 6 receiving no treatment according to the authors of the study.
We combine the 18 patients of \cite{landau2013evolution} with those of a similar study~\cite{schuh2012monitoring} wherein 3 CLL patients received chemotherapy and were sequenced at multiple time points.
The multiple time points for the 3 patients studied in \cite{schuh2012monitoring} are decomposed into all combinatorial triplets.
Therefore, there are a total of 18 phylogenetic trees inferred from \cite{landau2013evolution} and 12 phylogenetic trees inferred from~\cite{schuh2012monitoring}.
In Figure~\ref{fig:chemoCLL}, we map this data to $\mathbb{P}\Sigma_3$ and color based on treatment status.

\begin{figure}
    \begin{subfigure}{0.5\linewidth}
    \centering
    \includegraphics[height=2.2in]{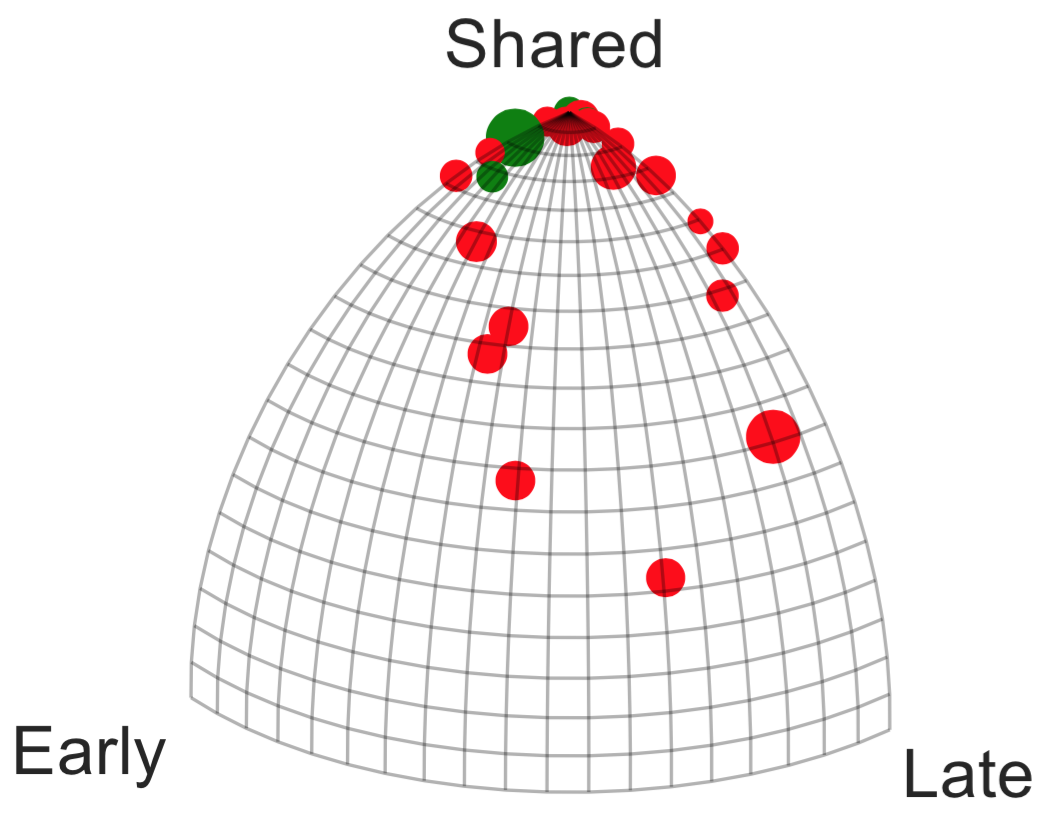}
    \end{subfigure}
    ~
    \begin{subfigure}{0.5\linewidth}
    \centering
    \includegraphics[height=2.2in]{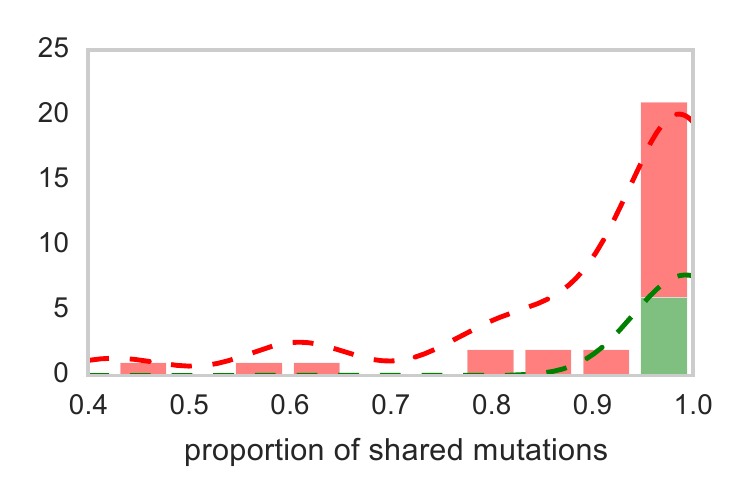}
    \end{subfigure}

    \caption{{\bf Different evolutionary patterns observed in CLL.} (Left) In patients that did not receive chemotherapy the tumor exome did not change. Tumors from these patients, represented in green, shared most mutations along different time points. However, under therapy (shown in red) mutations before therapy were not present after treatment, and new mutations were acquired. (Right) The proportion of mutations shared between time points is significantly lower among patients treated with chemotherapy ($p = 0.049$, log-rank test).}
    \label{fig:chemoCLL}
\end{figure}

From a clinical standpoint, a central question is whether treatment promotes evolution of the cancer and whether there is strong evidence for avoiding cytotoxic therapy in patient management.
Quite clearly the distribution of 6 untreated patients resembles a pattern (in green) forms a tight cluster, indicating that tumors from patients who did not received therapy are stable genetically, sharing most of the mutations.
However, in red are represented the histories of tumors of 15 patients under therapy, presenting some mutations at different times that are not shared across different samples.
The ratio between the number of mutations that are exclusive in the early branch versus the ones that are share with other phases is represented in right hand side of Figure~\ref{fig:chemoCLL}.
A number close to zero indicates a genetically stable tumor.

To assess how different are the clonal histories of tumors from untreated vs treated patients, we studied the distance between the centroids of the two populations regarded as points in $\mathbb{P}\Sigma_3$.
The 95\% CI for the distance between the centroids of the treated / untreated groups is (0.15, 0.36), under 1000-fold bootstrap resampling.
The analogous intervals for untreated / untreated and treated / treated are (0.01, 0.14) and (0.02, 0.16) respectively.
This analysis shows that the centroids of these clinically distinct sets of patients are well-resolved, supporting the idea that untreated tumors are more stable than treated ones, where district mutations can appear along the evolution of the tumor.

\subsection{Tree geometry associated with grade at relapse in gliomas}\label{sec:glioma}

As another application, we examined low grade gliomas (LGG), a set of tumors of the central nervous system most often involving astrocytes or oligodendrocytes.
They are distinguised from high grade gliomas (III, IV), such as glioblastoma multiforme, by the absence of anaplasia and have a more favorable prognosis.
Surgery alone is not considered curative for LGG and patients are typically treated with adjuvant radiation therapy, chemotherapy, or both.
If a patient relapses, the tumor may be observed to have a higher grade at that time.
Johnson \textit{et al.} studied a cohort of 23 LGG patients who relapsed, many of whom were treated with the chemotherapeutic agent temozolamide (TMZ)~\cite{johnson2014mutational}.
Whole exome sequencing was performed on tumor tissue at diagnosis and at relapse in an effort to characterize the evolution of recurrent glioma.

\begin{figure}
    \begin{subfigure}{0.5\linewidth}
    \centering
    \includegraphics[height=2.2in]{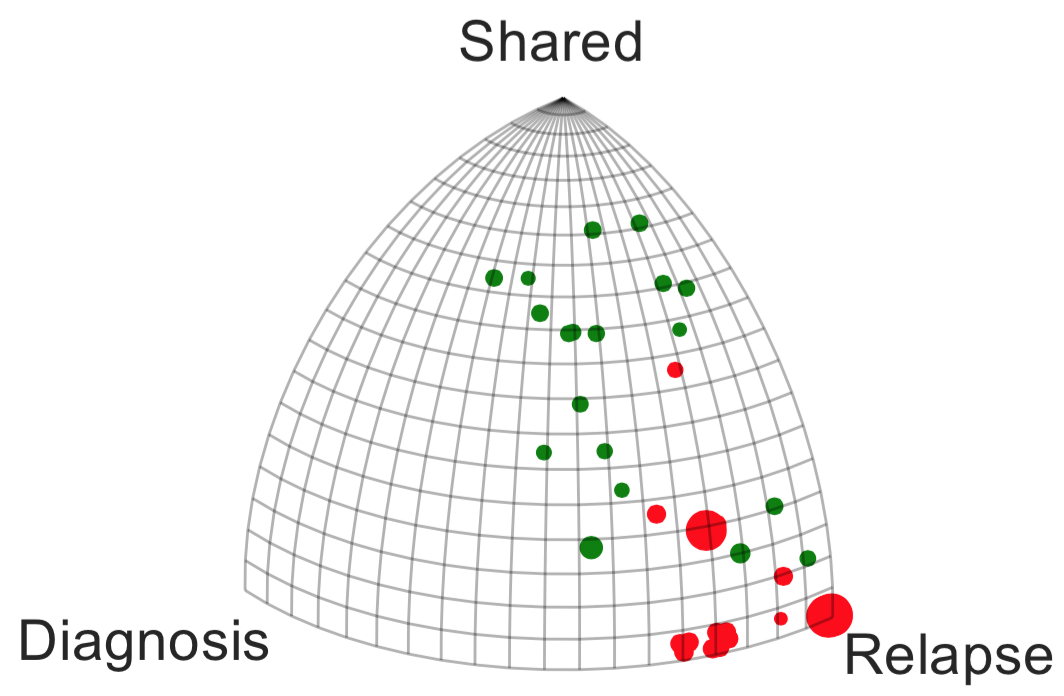}
    \end{subfigure}
    ~
    \begin{subfigure}{0.5\linewidth}
    \centering
    \includegraphics[height=2in]{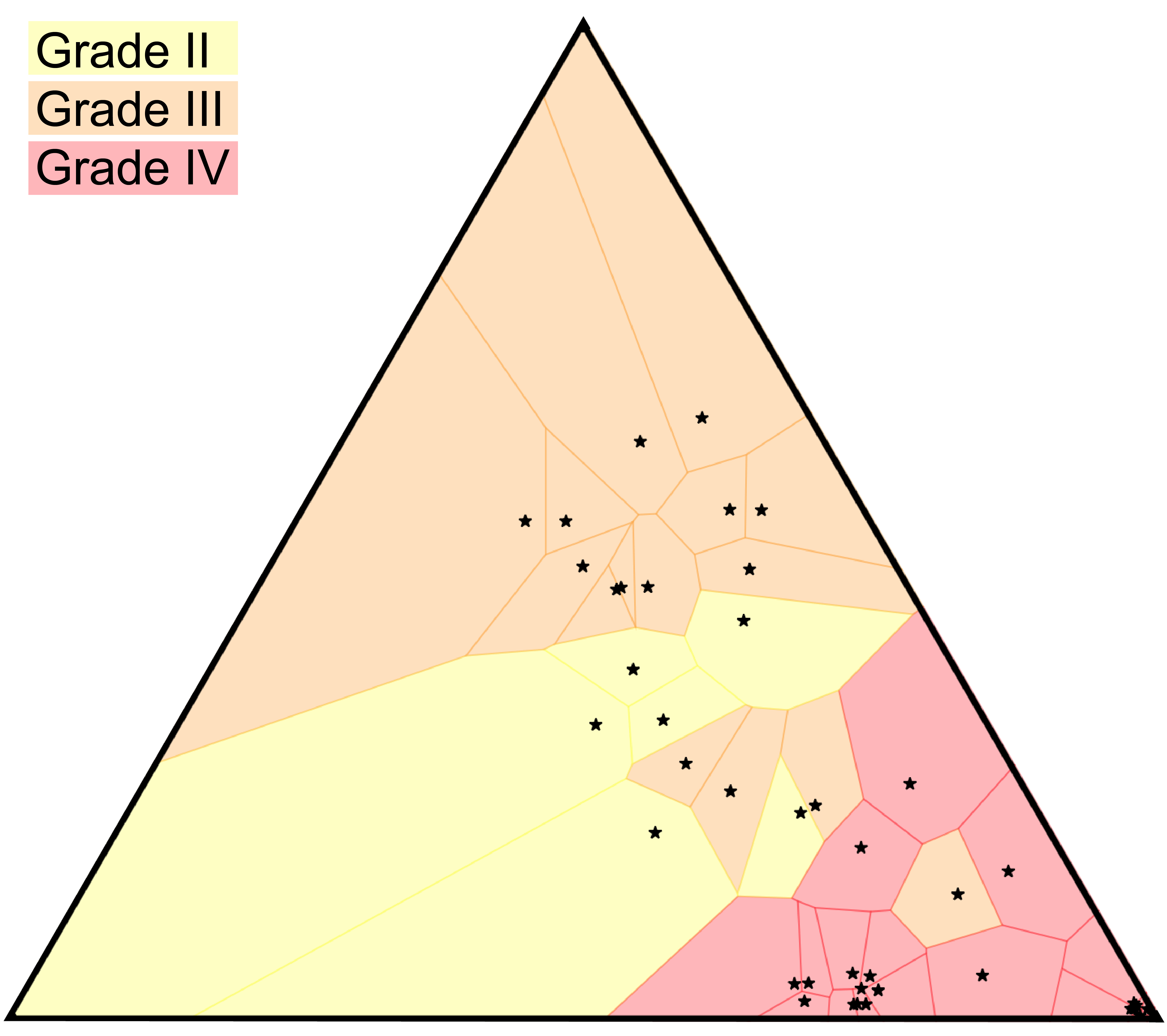}
    \end{subfigure}

    \caption{{\bf Effects of temozolamide (TMZ) treatment in relapsed glioma.} Exome sequencing was performed both at diagnosis and at relapse in 23 glioma patients, allowing for 46 phylogenetic trees to be inferred (spatial replicates in certain cases). (Left) Patients treated with TMZ are colored red, and the size of a point denotes the total number of mutations observed. There is a clear tendency of TMZ-treated patients to localize in a particular corner of the space and to exhibit more mutations associated to the therapy. (Right) Less obvious was the association between the shape of patient's tree and the histologic grade of the tumor at relapse, displayed as a Voronoi tessellation.}
    \label{fig:gliomaTMZ}
\end{figure}

TMZ is an alkylating agent that directly damages the cellular genome, and accordingly we see that the subset of patients that were treated with TMZ show a greater acquisition of relapse-specific mutations.
After projecting the trees to $\mathbb{P}\Sigma_3$, we find the 95\% confidence interval for distance between centroids of the treated / untreated groups is (0.31, 0.48), under 1000-fold bootstrap resampling.
The analogous intervals for untreated / untreated and treated / treated are (0.02, 0.13) and (0.02, 0.16) respectively.
TMZ treatment status defines two statistically well-resolved sub-populations of patients with respect to the shape of their evolutionary behavior, an observation recently reinforced by the larger study of~\cite{wang2016clonal}.
Also of interest is the apparent correlation between the geometry of the phylogenetic tree and the histologic grade at relapse.
A 1-nearest neighbor classifier of this trinary observable (grade II, III, or IV at relapse) yields 85\% accuracy under two-fold cross-validation, and the accuracy does not improve with larger values of $k$.
The tesselation of the space associated to a 1-nearest neighbor classifier is known as a Voronoi diagram, and we have colored the cells in accordance with the grade at relapse.

This example shows that in the simple case of trees with three leaves, evolutionary tumor histories are visibly different under different therapeutic regimes.
Moreover, we see that evolutionary trajectory can be associated with prognosis, as measured by grade at relapse.

\subsection{Clonal dynamics in patient derived xenografts}\label{sec:xeno}

The recent development of single cell transcriptomics and genomics is providing an opportunity to study the role of clonal heterogeneity in tumors~\cite{navin2011tumour, eirew2014dynamics, patel2014single} and to identify small, previously uncharacterized cell populations~\cite{grun2015single}.
The single cell approach to studying complex populations brings with it new challenges associated with the large number of sampled genomes.
Another rapidly maturing technology in the modeling of tumor evolution is that of patient-derived xenografts.
Patient-derived xenografts (PDX) are generated by transplanting tumor tissue into immunodeficient mice, serially engrafting in new mice as each host animal expires.
This provides an \textit{in vivo} platform for drug screening as well as longitudinal monitoring of tumor adaptation and clonal dynamics.

We take advantage of recently published PDX data from breast cancer patients, where single-nucleus deep-sequencing was performed across a lineage of host animals engrafted with a primary lesion of triple-negative breast cancer~\cite{eirew2014dynamics}.
The data are comprised of normal tissue, a sample from the primary tumor, and three subsequent mouse passages.
Somatic mutation calls revealed 55 informative sites of substitution in this lineage, and these variants were assigned to eight distinct cellular populations based on bulk sequencing.

\begin{figure}
    \begin{subfigure}{\linewidth}
    \centering
    \includegraphics[height=4in]{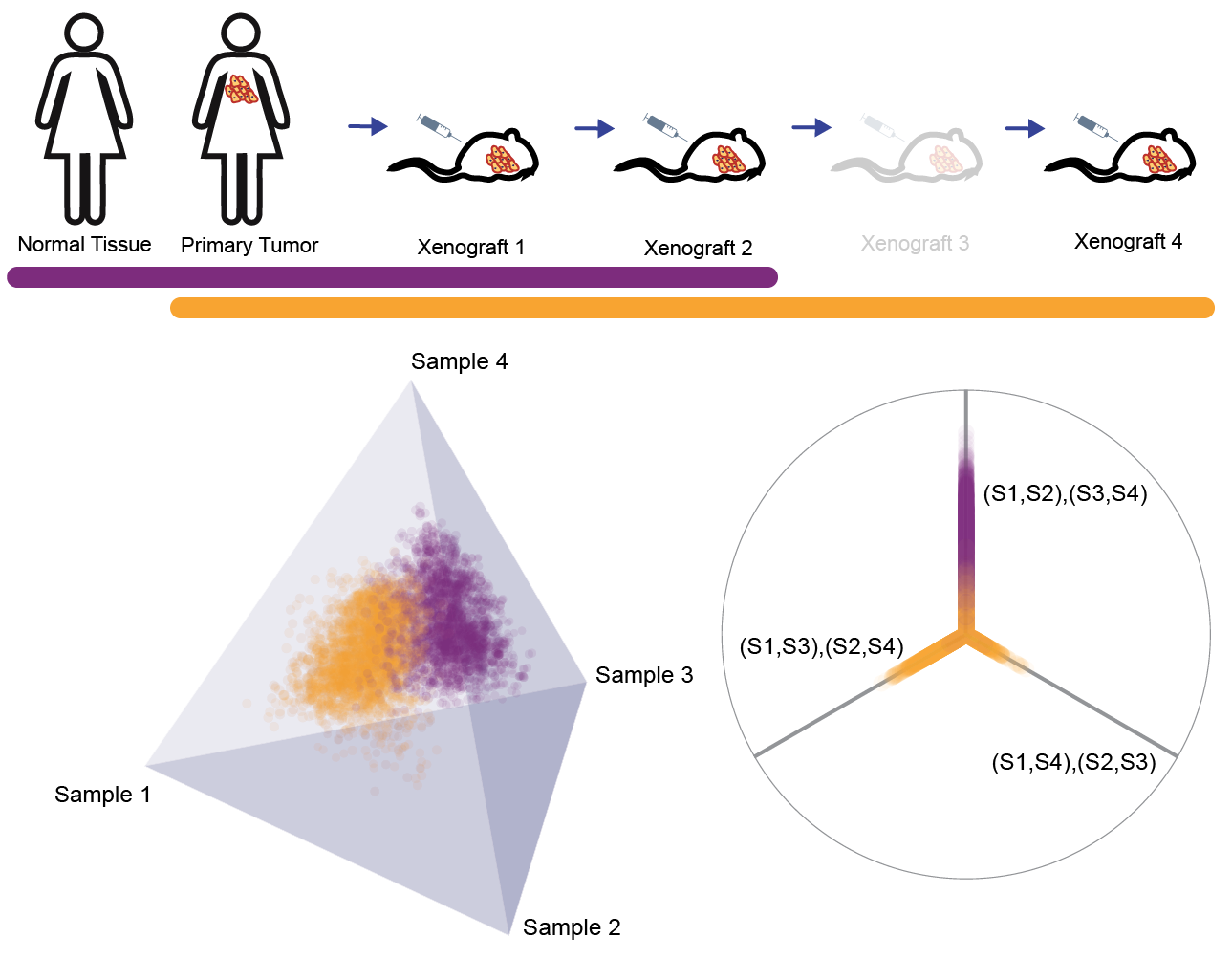}
    \end{subfigure}

    \begin{subfigure}{0.5\linewidth}
    \centering
    \includegraphics[height=2.5in]{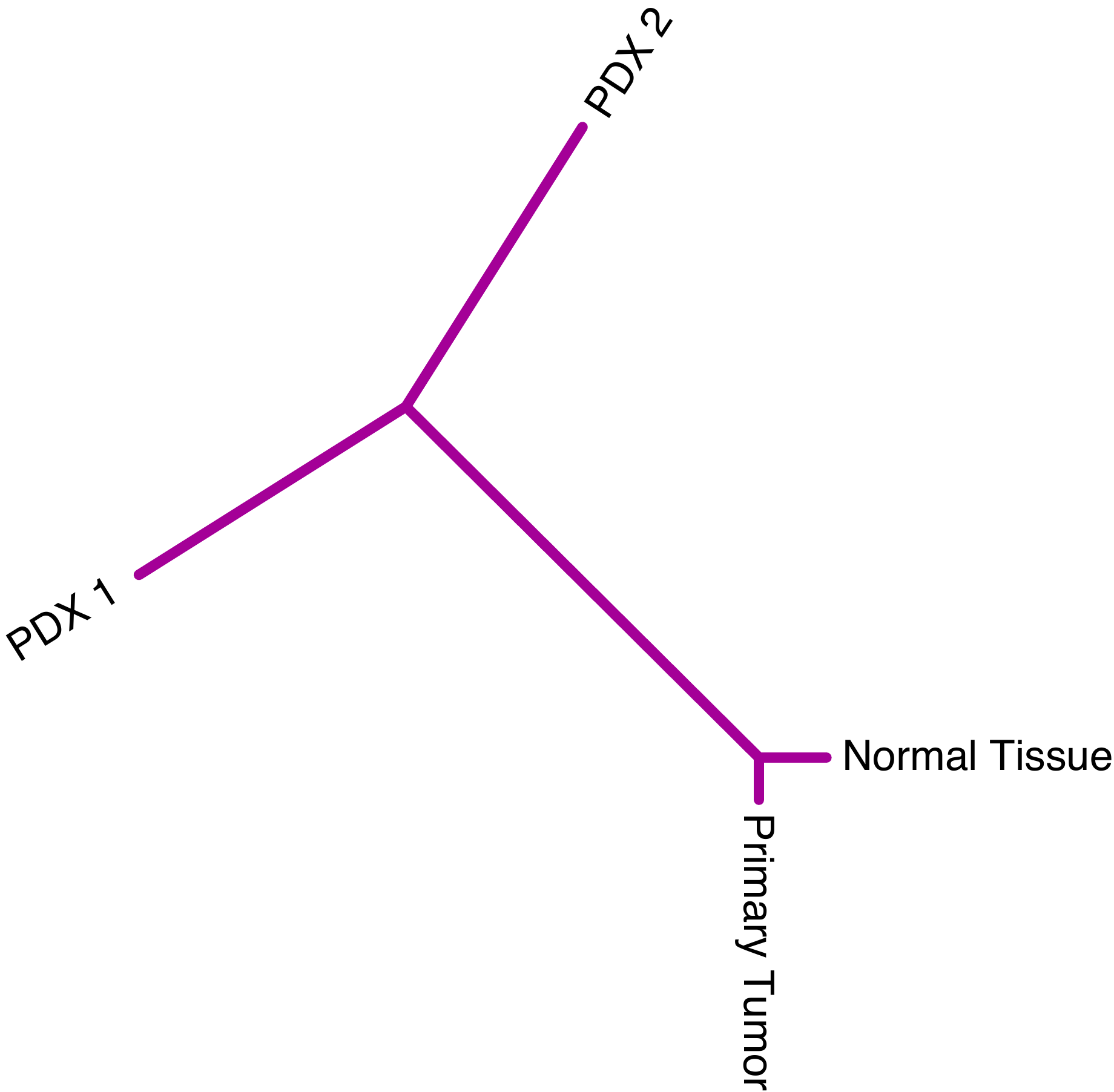}
    \end{subfigure}
    ~
    \begin{subfigure}{0.5\linewidth}
    \centering
    \includegraphics[height=2.5in]{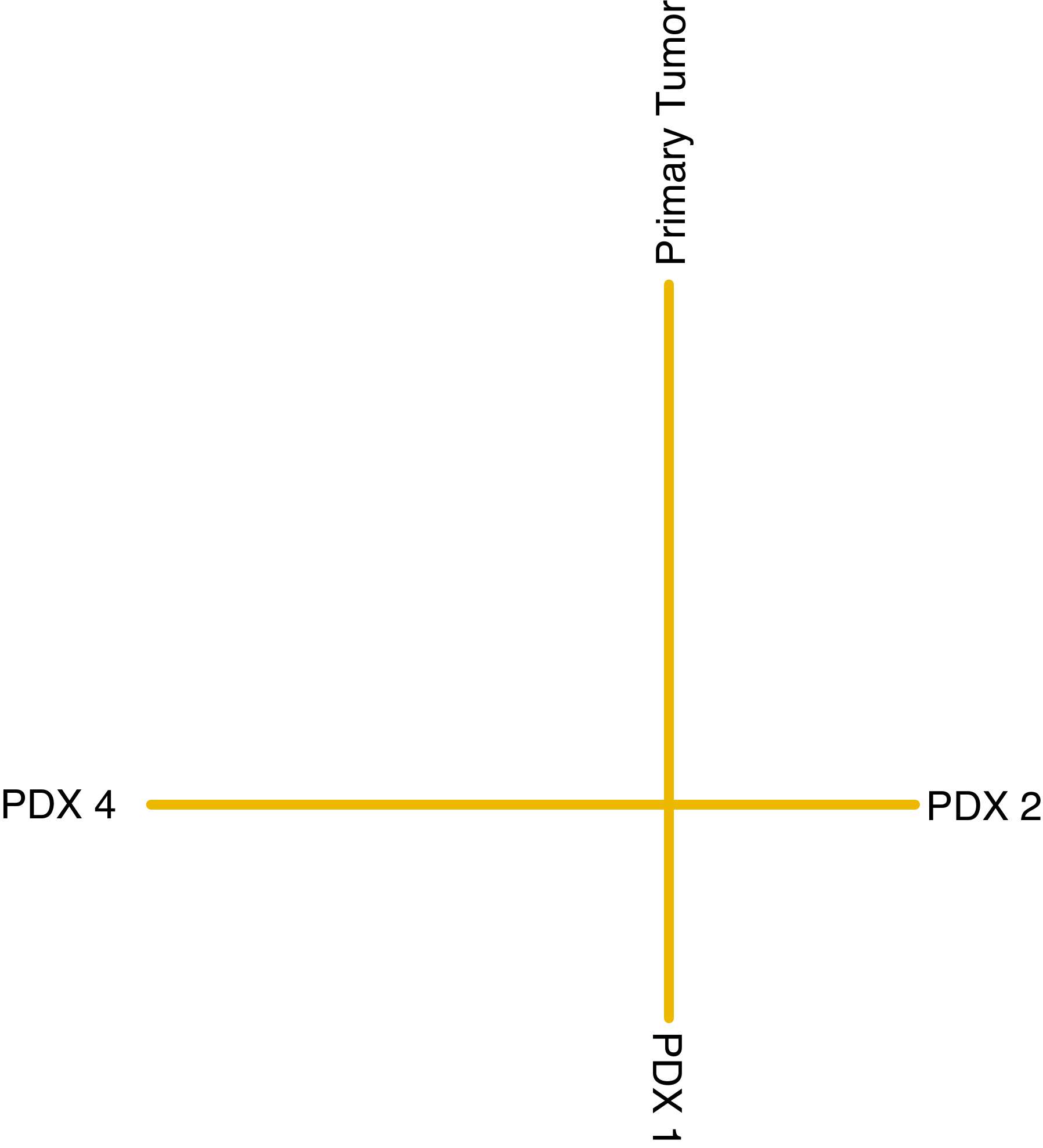}
    \end{subfigure}

    \caption{{\bf Emerging clonal heterogeneity in patient-derived xenograft.} Single cell analysis of tumor evolution in a breast cancer derived xenograft model. Single-nucleus deep-sequence data was obtained from mouse passages 1, 2, and 4, while only bulk sequencing data was available from the primary tumor and matched normal tissue. This data was used to generate two distributions of four-leaved trees, shown in purple and gold in $\mathbb{P}\Sigma_4 $. The former space displays lower standard deviation than the latter, whose centroid is a star tree.}
    \label{fig:xenograft}
\end{figure} 

Only bulk sequencing data were available from the primary tumor and matched normal tissue, while the three xenograft passages were sequenced at single-nucleus resolution.
For each cellular fraction we randomly sampled a single nucleus from the first, second, and fourth PDX passages (27, 36, 27 nuclei respectively were available).
This preparation of the data implies five sequential time points along the tumor's history: benign germline genome, genotype at diagnosis of primary tumor, genotype at first PDX, genotype at second PDX, and genotype at fourth PDX.
The combinatorial possibilities in the latter three time points give rise to a large forest of phylogenetic trees.

We examined the difference in heterogeneity between phylogenetic trees constructed on the first four time points and those constructed on last four time points.
The distributions trees from both time windows are visualized in Figure~\ref{fig:xenograft} as point clouds in $\mathbb{P}\Sigma_4 $.
Consistent linear evolution is seen from primary tumor through the first two xenograft passages, however we observe significant heterogeneity of tumor clones upon the fourth mouse passage.
The first time window (purple) is completely contained within the topology corresponding to linear evolution, unlike the second (gold) which is centered on the origin and extends into all three possible topologies.
The point cloud for the second time window displays a higher standard deviation than the first (10.49 vs. 8.69), and its centroid is essentially a star tree.
Centroids and variances are computed in $\Sigma_4$, prior to rescaling of branch lengths.
The high degree of genotypic heterogeneity giving rise to the second time window distribution is suggestive of a clonal replacement event between the time points of Xenograft 2 (X2) and Xenograft 4 (X4).
Many of the prevalent alterations before X4 disappear during the final passage, and many new mutations rise to dominance.
These results raise interesting questions about the long-term fidelity of PDX vehicles to the genetics of their ancestral primary tumors, which theoretically they serve to mimic.


\section{Detecting new dominant strains in seasonal influenza}\label{sec:flu}

In this section, we will describe an analysis of dynamics in the circulating hemagglutinin sequences of seasonal H3N2 influenza.
Influenza A is an RNA virus that annually infects approximately 5--10\% of adults and 20--30\% of children~\cite{WHO}, leading to more than half a million flu associated deaths.
Vaccination against the virus remains a major way of preventing morbidity.
However, the virus genome evolves rapidly, changing the antigenic presentation of proteins that are in the envelope of the virus, mostly hemagglutinin (HA).
These continuous antigenic changes, often referred as antigenic drift, can lead to failures in vaccine effectiveness.
The design of the influenza vaccine is based on collected isolates of previous years, leaning heavily on the results of hemagglutinin inhibition (HI) assays to detect drift variants.
The great majority of H3N2 isolates are only analyzed antigenically via HI assay, with approximately 10\% of viruses undergoing genetic sequencing of the HA segment~\cite{russell2008influenza}.
Relatively small genetic changes in the genome of the virus can cause drastic antigenic changes.
Influenza vaccine failures can be associated with the emergence of new clones with novel antigenic properties that have replaced recent circulating strains.

A different phenomenon that can lead to significant antigenic changes is reassortment.
The influenza virus genome consists of eight single-stranded RNA segments, two of which code the antigenic surface glycoproteins, hemagglutinin (HA) and neuraminidase (NA).
When two different viruses co-infect the same host, they can generate  progeny containing segments from both parental strains.
Changes in the constellation of segments could introduce dramatic genetic and antigenic changes.
Reassortments could occur between different viruses infecting the same host or even, more rarely, viruses that are typically found in different hosts.
Introducing viral segments from non-human reservoirs has led to major pandemics over the past century~\cite{rabadan2007evolution, rabadan2008non} and was particularly associated with the emergence of the 2009 H1N1 pandemic~\cite{trifonov2009geographic, solovyov2009cluster}.
In 1968, the reassortment of then-circulating H2N2 with avian strains created H3N2 viruses, which have since been infecting human population.
Reassortment of seasonal strains can also contribute to vaccine failure as low frequency hemagglutinin segments could combine with highly transmissible strains.
In particular, reassortment of two H3N2 clades during the 2002-2003 season resulted in a major epidemic and higher incidents of vaccine failure in the succeeding season~\cite{centers2004preliminary}.

In this section, we analyze how the emergence of a novel subclone could be identified by unusual/unexpected tree structures and we develop a genomic predictor of vaccine effectiveness.
We study the recent history of influenza A H3N2 using 1,089 sequences of hemagglutinin collected in the United States between 1993 and 2016 (Figure~\ref{fig:flu_HA_sampling}).
Genomic data was downloaded from the GISAID EpiFlu database (www.gisaid.org), and aligned with MUSCLE~\cite{edgar2004muscle} using default parameters.
Vaccine effectiveness figures were drawn from the meta analysis of~\cite{gupta2006quantifying}.

\begin{figure}
    \begin{subfigure}{\linewidth}
    \centering
    \includegraphics[width=0.8\linewidth]{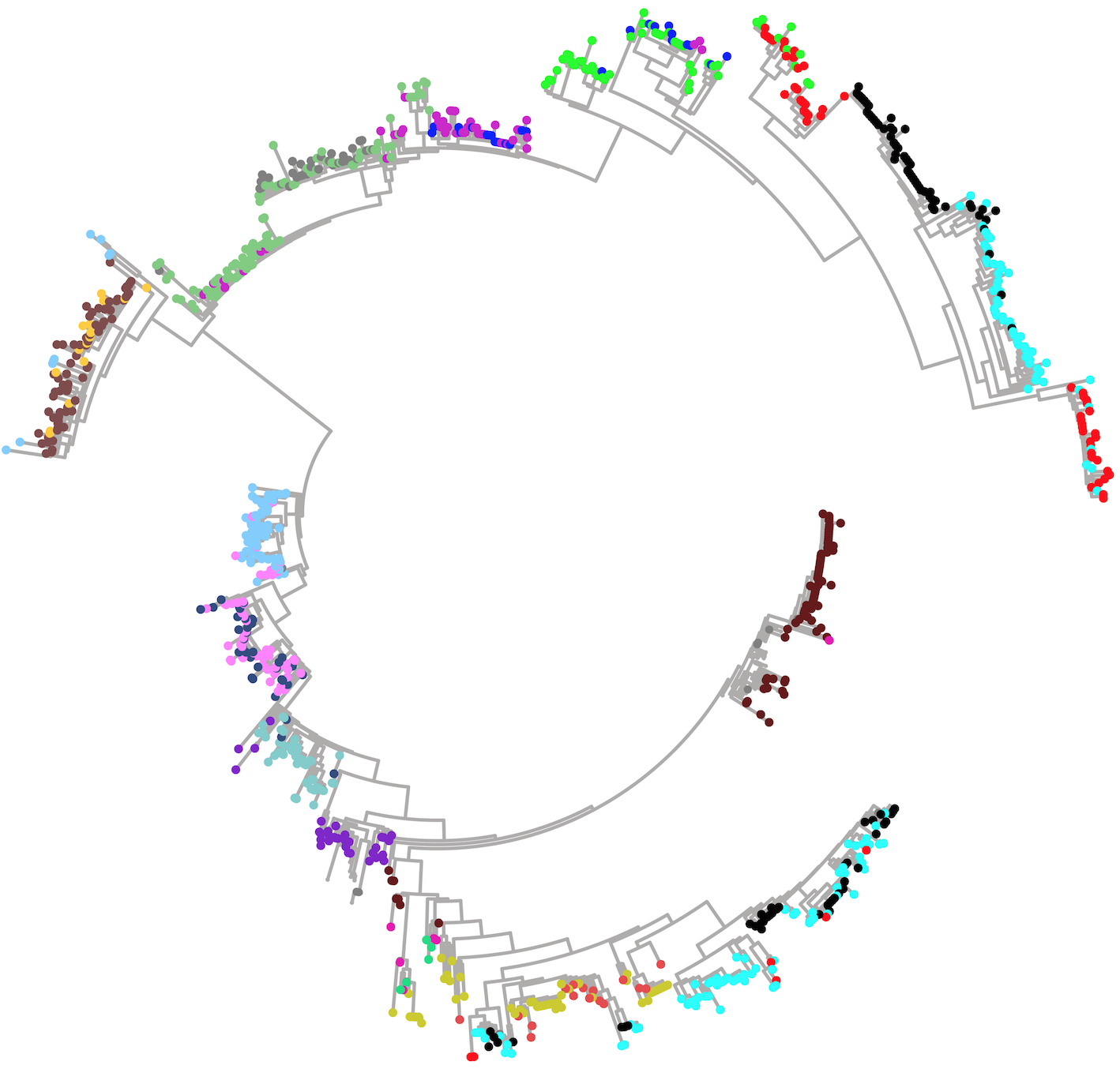}
    \end{subfigure}
    
    \begin{subfigure}{\linewidth}
    \centering
    \includegraphics[width=\linewidth]{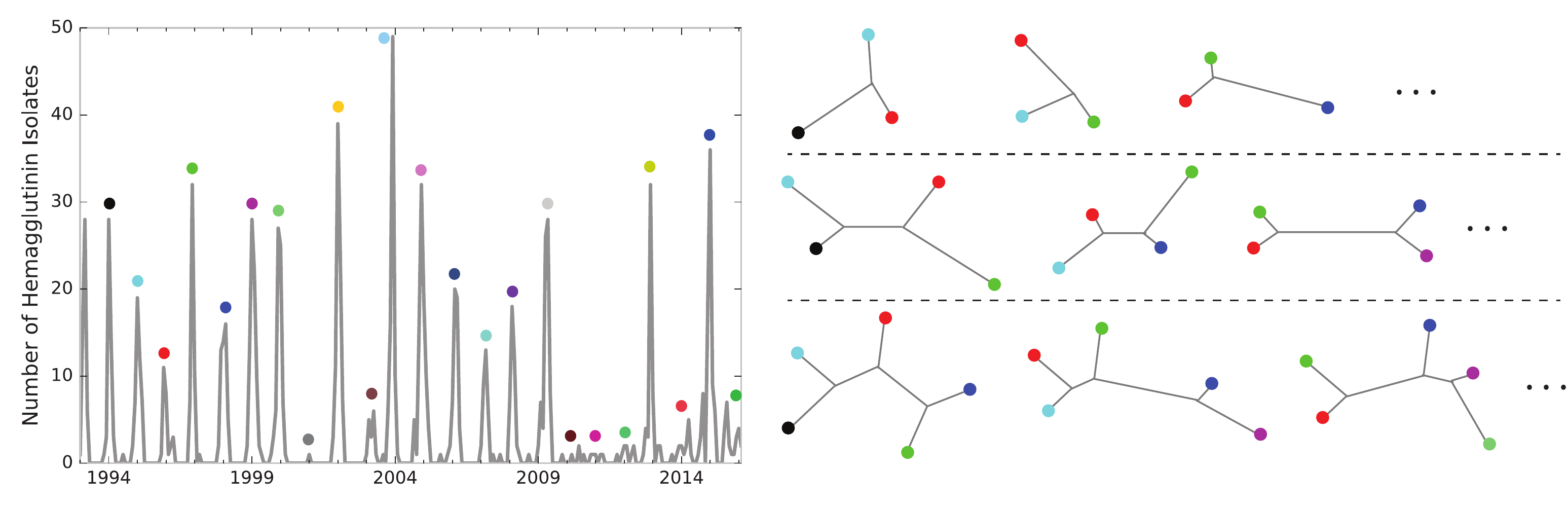}
    \end{subfigure}

    \caption{{\bf H3N2 Hemagglutinin (HA) isolates 1993--2016.} Identifying statistical patterns in large phylogenies is often difficult. (Top) Overall phylogenetic tree inferred from 1,089 sequences, collected in New York state, spanning 24 influenza seasons. (Bottom) There was inter-season variability in the number of H3N2 isolates collected, and we generate sequences of lower dimensional trees by randomly selecting a single HA per season within a temporal window. This procedure decomposes the overall phylogeny into distributions of smaller trees.}
    \label{fig:flu_HA_sampling}
\end{figure}

First, we used tree dimensionality reduction to obtain visualizations of the flu evolutionary profile for exploratory data analysis.
Using 1,089 full length sequences of hemagglutinin collected in New York state between 1993 and 2016, we relate HA sequences from one season to those from the preceding ones.
We randomly select sets of HAs such that a single isolate is drawn from each of three, four, or five consecutive seasons to form a temporal window.
Neighbor-joining with a Hamming metric is used to generate unrooted trees from the temporally ordered tuples of HA isolates.
The case of length five windows is illustrated in Figure~\ref{fig:flu_quint}, where we superimpose each temporal slice onto the same moduli space.
If the current viruses are most similar to viruses circulating in the immediately preceding season, one should expect an unrooted tree topology relating $(1,2),3,(4,5)$ branches.
Deviations from this topology indicate unexpected genetic relationships.
Figure~\ref{fig:flu_quint} confirms that the vast majority of points land along the topologies most compatible with linear evolution of HA.
Certain windows yield well resolved clusters of trees, while others are dispersed point clouds.
Either scenario might be indicative of elevated diversity in the HA segment or a clonal replacement event underway.
The window ending in the 2003-2004 season shows a clear reemergence of strains in 2003-2004 that were genetically similar to those circulating in the 1999-2000 season~\cite{holmes2005whole}.

\begin{figure}
    \centering
    \includegraphics[width=6in]{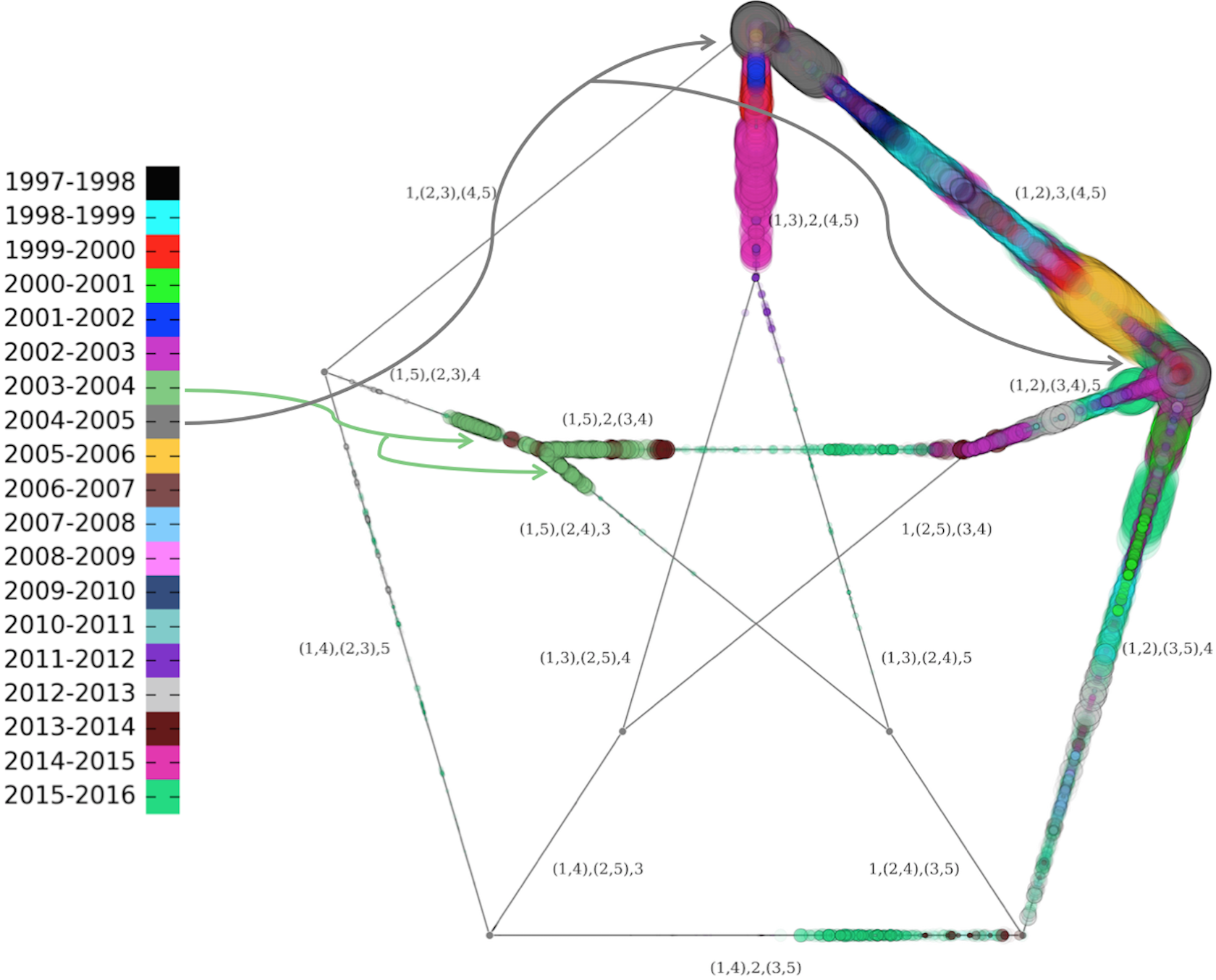}
    \caption{{\bf Temporally windowed subtrees in $\mathbb{P}\Sigma_5$.} Using a common set of axes for projective tree space, we superimpose the distributions of trees derived from windows five seasons long. 1,089 full-length HA segments (H3N2) were collected in New York state from 1993 to 2016. Trees are colored by their most recent season, and point size encodes the magnitude of the cone coordinate. Two consecutive seasons of poor vaccine effectiveness are 2003-2004 and 2004-2005, highlighted with green and gray arrows respectively. The green distribution strongly pairs the 1999-2000 and 2003-2004 strains, hinting at a reemergence.}
    \label{fig:flu_quint}
\end{figure}

\begin{figure}
    \centering
    \includegraphics[width=\linewidth]{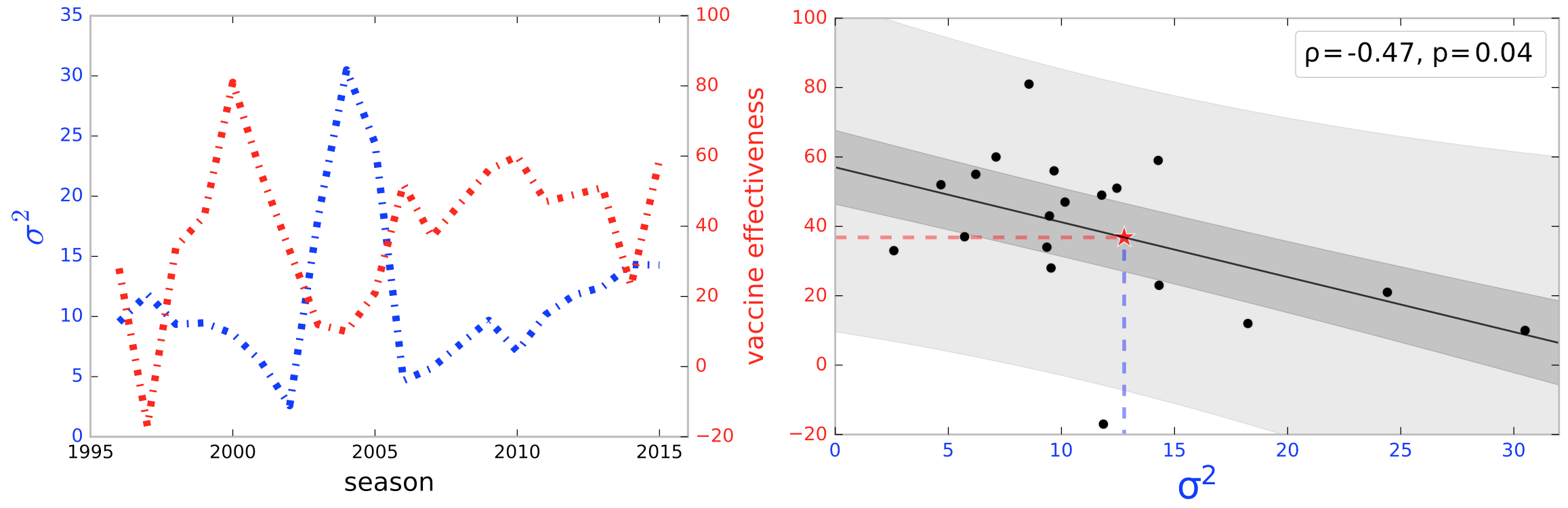}
    \caption{{\bf Diversity in recent circulating HA predicts vaccine failure.} Negative correlation observed between vaccine effectiveness in season $(t, t+1)$ and the variance in trees generated from seasons $(t-1, t), (t-2, t-1), (t-3, t-2)$.}
    \label{fig:flu_VE}
\end{figure}

Next, we used tree dimensionality reduction based on windowing to generate a predictor for vaccine effectiveness.
A natural hypothesis is that elevated HA genetic diversity in circulating influenza predicts poor vaccine performance in the subsequent season.
Distribution features that may intuitively predict future vaccine performance include the variance and the number of clusters in the point cloud.
However, given our limited number of temporal windows, too rich a feature space runs the risk of overfitting, so we focused simply on the variance.
In Figure~\ref{fig:flu_VE} we illustrate the prediction of vaccine effectivenss using the variance of the distribution of trees generated by a lagging window of length 3.
In our notation, a window labeled year $y$ would include the flu season of $(y-1, y)$ and preceding years.
The vaccine effectiveness figures represent season $(y, y+1).$
It is clear, both from the left and right panels, that lower variance in a temporal window predicts increased future vaccine effectiveness, with a Spearman correlation of -0.52 and p-value of 0.02.
The lone outlier season came in 1997-1998~\cite{gupta2006quantifying}, when the vaccine effectiveness was lower than expected.
In this season the dominant circulating strain was A/Sydney/5/97 while the vaccine strain was A/Wuhan/359/95.
The analysis can be carried out with length-4 or length-5 windows to yield a similar result.
Noteworthy is the fact that this association rests only on aligned nucleotide sequence, making no direct use of HA epitope or HI assay data.
The correlation between variance of tree distributions and vaccine effectiveness allows us to estimate the influenza vaccine effectiveness for future seasons based on genomic data.
In particular, for the 2016-2017 season, the variance in tree space was 12.77, corresponding to an approximate effectiveness of 36\%.

We can also use our approach to retrospectively examine the annual W.H.O. decisions to either keep or change the H3N2 componenet of the Northern hemisphere vaccine, and whether the choice resulted in superior or inferior vaccine effectiveness in the following season.
A coarse-grained view of the antigenic features of our HA isolates can be obtained using the work of~\cite{du2012mapping}, who defined a clustering of antigenic phenotype and trained a naive Bayes model, that maps HA protein sequence to these labels.
We begin by labeling our phylogenetic trees using the antigenic cluster (AC) assignments of the classifier, and selecting those seasons in which more than one AC is observed.
The goal is to define a mapping between features of the different AC distributions and the change in vaccine effectiveness of the next season relative to the present.
Figure~\ref{fig:flu_antigenic} indicates that in 9 of the 19 seasons for which we have data, only a single AC was observed.
In this case it does not make sense to ask whether a change in H3N2 vaccine strain should have considered for the subsequent season, since we detect no antigenic diversity.
The other 10 seasons were represented as vectors comprised of 4 features: distance between the centroids of the older / newer AC distributions, standard deviation of the older AC distribution, standard deviation of the newer AC distribution, whether the vaccine strain for the subsequent season was changed from that of the current season.
We associated a binary label to each of the 10 seasons: whether the change in vaccine effectiveness was positive or negative.

Using a heavily restricted vocabulary of logical and arithmetic operators, we exhaustively searched for a decision tree mapping the feature vectors to the binary label, based on a fitness function maximizing area under the receiver operating characteristic~\cite{schmidt2009distilling}.
A decision rule that achieves perfect classification on this data set is depicted in Figure~\ref{fig:flu_antigenic}.
The very small size of the data set means that caution is warranted when interpreting the results.
Nonetheless, the results do suggest that if a vaccine strain is unchanged, a higher variance in the old AC distribution predicts improvement in vaccine effectiveness ($\Delta V.E. > 0$), while if a vaccine strain is changed, then the new AC distribution being well-resolved from the old AC predicts $\Delta V.E. > 0$.

\begin{figure}
    \centering
    \includegraphics[width=\linewidth]{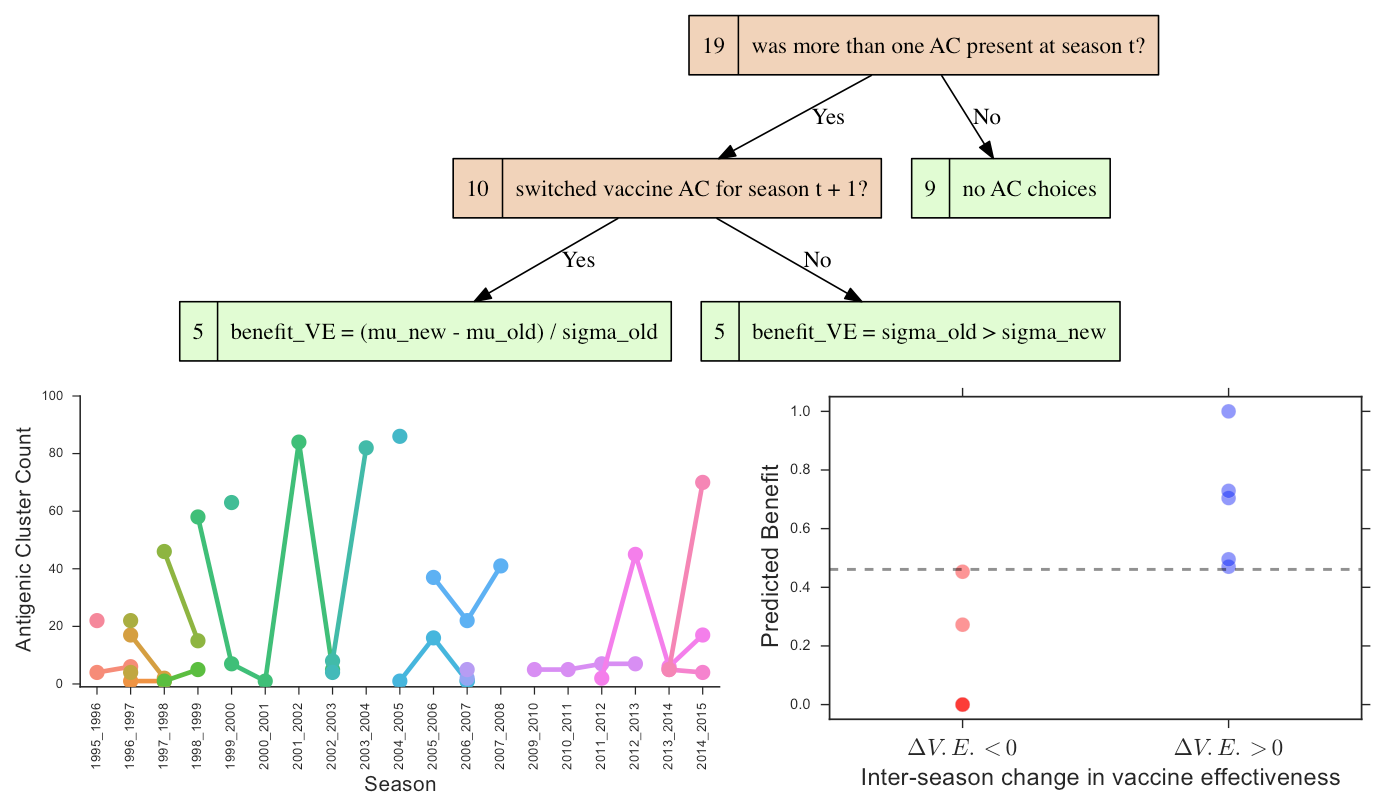}
    \caption{{\bf Stratification of trees on predicted antigenic cluster.} Trees are categorically labeled using the predicted~\cite{du2012mapping} antigenic cluster (AC) of their most recent isolate. (Top) A simple decision rule is fit~\cite{schmidt2009distilling} to the 10 seasons in which more than one AC is observed, explaining the change in vaccine effectivness (VE) in the following season. (Bottom) Colors encode different AC labels associated to the HA sequences collected over time. We also plot the predicted VE change given by the decision rule against the historical data, for the 10 relevant seasons.}
    \label{fig:flu_antigenic}
\end{figure}


\subsection*{Acknowledgments}

The authors gratefully acknowledge the constructive feedback of Gillian Grindstaff, Melissa McGuirl, and Daniel Rosenbloom.
This work was supported by a TL1 personalized medicine fellowship (5TL1TR000082) and NIH grants (R01 CA179044, R01CA185486, R01GM117591, U54 CA193313).

\vspace{0.25in}

\noindent Code associated with this work can be obtained at https://github.com/RabadanLab.


\bibliography{evomod_preprint}


\end{document}